\theoremstyle{plain}
\newtheorem{thm}{Theorem} 
\newtheorem{lemma}{Lemma}
\theoremstyle{definition}
\newtheorem{defn}[thm]{Definition} 
\newtheorem{exmp}[thm]{Example} 
\newlength
\newlength
\def\x{{\mathbf x}}
\def\X{{\mathbf X}}
\def\d{{\mathbf d}}
\def\uh{ \overline{\mathbf{u}}}
\def\u{ \mathbf{u}}
\def\r{ \mathbf{r}}
\def\xi{\x_i}
\def\Y{{\mathbf Y}}
\def\u{{\mathbf u}}
\def\S{{\mathbf S}}
\def\D{{\mathbf D}}
\def\M{{\mathbf Z}}
\def\P{{\mathbf P}}
\def\alfa{{\boldsymbol \alpha}}
\def\gama{{\boldsymbol \gamma}}
\def\Gama{{\boldsymbol \Gamma}}
\def\Delt{{\boldsymbol \Delta}}
\def\O{{\boldsymbol \Omega}}
\def\Poi{{P_{0,\infty}}}
\def\Loi{{\ell_{0,\infty}}}
\def\Poie{{P_{0,\infty}^\epsilon}}
\def\Lamda{{\boldsymbol \Lambda}}
\def\R{{\mathbf R}}
\def\E{{\mathbf E}}
\def\DT{\D_\mathcal{T}}
\def\DTB{\D_{\overline{\mathcal{T}}}}
\DeclarePairedDelimiterX\set[1]\lbrace\rbrace{#1}
\begin{document}

\title{Working Locally Thinking Globally - Part II: Stability and Algorithms for Convolutional Sparse Coding}
\author{Vardan~Papyan*,
	Jeremias~Sulam*,
	Michael~Elad

\thanks{*The authors contributed equally to this work. 
	
All authors are with the Computer Science Department, the Technion - Israel Institute of Technology.}}

\maketitle

\begin{abstract}
The convolutional sparse model has recently gained increasing attention in the signal and image processing communities, and several methods have been proposed for solving the pursuit problem emerging from it -- in particular its convex relaxation, Basis Pursuit.
In the first of this two-part work, we have provided a theoretical back-bone for this model, providing guarantees for the uniqueness of the sparsest solution and for the success of pursuit algorithms by introducing the notion of stripe sparsity and other related measures.
Herein, we extend the analysis to a noisy regime, thereby considering signal perturbations and model deviations.
We address questions of stability of the sparsest solutions and the success of pursuit algorithms, both greedy and convex.
Classical definitions such as the RIP are generalized to the convolutional model, and existing notions such as the ERC are connected to our setting.
On the algorithmic side, we demonstrate how to solve the global pursuit problem by using simple local processing, thus offering a first of its kind bridge between global modeling of signals and their patch-based local treatment.
\end{abstract}

\vspace{-0.1cm}
\begin{IEEEkeywords}
Sparse Representations, Convolutional Sparse Coding, Stability, Orthogonal Matching Pursuit, Basis Pursuit, Restricted Isometry Property (RIP), Exact Recovery Condition (ERC), Global Pursuit, Local Pursuit.
\end{IEEEkeywords}

\vspace{-0.2cm}
\section{Introduction}
The convolutional sparse coding model has enjoyed of growing popularity in recent years, overcoming some of the limitations of traditional sparse representations \cite{Bruckstein2009}. This model assumes that a global signal can be factorized into the multiplication of a dictionary, which is assumed to be a concatenation of Circulant and banded matrices, and a sparse vector. This, in turn, results in a global model which admits a shift invariant local structure -- a common assumption in signal and image processing.

Although several works have proposed efficient algorithms to solve the corresponding pursuit \cite{Bristow2013,Wohlberg2016,Bristow2014,Heide2015,Kong2014}, very little is known about the theoretical guarantees for the success of these methods, or their connection to classical sparsity-based models. Typical results in the sparse representation literature (see \cite{Elad_Book} for a thorough review) are given in terms of the total number of non-zeros in the representation vector and offer weak conditions in this setting as they disregard the intrinsic architecture of the model. In particular, these results become meaningless when the length of the signal grows, as in the case of natural images.

In the first of this two-part work \cite{Papyan2016_1}, we have presented a detailed study of this model, establishing guarantees for the uniqueness of the sparsest solution for the convolutional problem, where sparsity is measured in terms of a novel quantity, that of the $\Loi$ norm, which considers sparsity in overlapping stripes.
Moreover, we have ensured the success of prominent algorithms in the sparse literature -- in particular, the Orthogonal Matching Pursuit (OMP) and the Basis Pursuit (BP) -- in recovering such a solution in the noiseless case.

These results have shown the importance and benefits of performing a localized analysis of the global convolutional setting. However, these are not directly applicable to real signal scenarios, and in particular to recently developed algorithms \cite{Bristow2014}, as they assume ideal constructions. In this sequel, we undertake the study of a noisy regime, allowing for measurement errors and model deviations. To this end, we generalize and tie past theoretical constructions, such as the Restricted Isometry Property (RIP) \cite{Candes2005} and the Exact Recovery Condition (ERC) \cite{Tropp2004}, to the convolutional framework, proving the stability of this model in this case as well. Furthermore, we show that the solutions found by OMP and BP remain in the vicinity of the underlying sparse vector, thus providing theoretical guarantees for the above methods.

From a practical point of view, despite many works having addressed the convolutional sparse coding problem in a variety of applications \cite{Morup2008,Zhu2015,Zeiler2010,Kavukcuoglu2010,Bristow2014,Huang2015}, a connection to local patch-based models is still missing, with the exception of the recent work in \cite{batenkov2017global}. This is somewhat surprising, as this local treatment has been demonstrated to be efficient in common signal and image processing tasks \cite{Dabov2006,Zoran2011,Romano2015}. In this paper we propose to bridge this gap by solving the global pursuit using solely local operations relying on a shift invariant local model, while preserving the optimality of the overall global pursuit.

In the following section, we begin by reviewing traditional stability results in classic sparse representations theory, before describing the convolutional sparse model and summarizing our main results from part I. The main problem handled in this paper is formally defined in Section \ref{Sec:Global2Local}, and then analyzed in Section \ref{Sec:TheoreticalAnalysis}. The practical aspects of the global pursuit by means of local processing is delineated in Section \ref{Sec:FromGlobal2LocalProcessing}, where two algorithms are proposed. We finally conclude this work in Section \ref{Sec:Conclusions}, proposing exciting future directions.

\section{Preliminaries}
\label{Sec:Preliminaries}
\subsection{The Global Sparse Model}
In the sparse representation model one assumes that a signal $\X \in \mathbb{R}^N$ can be decomposed as $\X = \D \Gama$, where $\D \in \mathbb{R}^{N\times M}$, $\Gama \in \mathbb{R}^M$ and $\|\Gama\|_0 \ll N$. Given such a signal, finding its sparsest representation is known as Sparse Coding, and it attempts to solve the constrained $P_0$ problem:
\begin{equation}
(P_0): \quad \underset{\Gama}{\min} \ \|\Gama\|_0 \ \text{ s.t. } \  \D\Gama = \X. 
\label{Eq:P0problem}
\end{equation}
When dealing with natural signals, the $P_0$ problem is often relaxed to consider model deviations as well as measurement noise. In this set-up one assumes $\Y = \D\Gama + \E$, where $\E$ is a nuisance vector of bounded energy, $\| \E\|_2 \leq \epsilon$. The corresponding recovery problem can then be stated as follows:
\begin{equation}
(P_0^\epsilon): \quad \underset{\Gama}{\min} \ \|\Gama\|_0 \ \text{ s.t. } \|\D\Gama - \Y\|_2 \leq \epsilon.
\end{equation}
Unlike the noiseless case, given a solution to the above problem, one can not claim its uniqueness in solving the $P_0^\epsilon$ problem but instead can guarantee that it will be close enough to the true vector $\Gama$ that generated the signal $\Y$.
This kind of stability results have been derived in recent years by leveraging the Restricted Isometry Property (RIP) \cite{Candes2005}. A matrix $\D$ is said to have a k-RIP with constant $\delta_k$ if this is the smallest quantity such that
\begin{equation}
(1-\delta_k) \| \Gama \|_2^2 \leq \| \D\Gama \|_2^2 \leq (1+\delta_k) \| \Gama \|_2^2,
\end{equation}
for every $\Gama$ satisfying $\| \Gama \|_0=k$. Based on this property, it was shown that assuming $\Gama$ is sparse enough, the distance between $\Gama$ and all other solutions to the $P_0^\epsilon$ problem is bounded \cite{Elad_Book}. Similar stability claims can be formulated in terms of the mutual coherence also, by exploiting its relationship with the RIP property \cite{Elad_Book}. More on these results is brought in the next Section, as we dive into the new analysis. 

Success guarantees of practical algorithms, such as the Orthogonal Matching Pursuit (OMP) and the Basis Pursuit Denoising (BPDN), have also been derived under this regime. In the same spirit of the aforementioned stability results, the work in \cite{Donoho2006} showed that these methods recover a solution close to the true sparse vector as long as some sparsity constraint, relying on the mutual coherence of the dictionary, is met.

Another useful property for analyzing the success of pursuit methods, initially proposed in \cite{Tropp2004}, is the Exact Recovery Condition (ERC). Formally, one says that the ERC is met for a support $\mathcal{T}$ with a constant $\theta$ whenever
\begin{equation}
	\theta =  1 - \underset{i \notin \mathcal{T}}{\max} \|\D^{\dagger}_{\mathcal{T}} \d_i \|_1 > 0,
\end{equation}
where we have denoted by $\D_\mathcal{T}^{\dagger}$ the Moore-Penrose pseudoinverse of the dictionary restricted to support $\mathcal{T}$, and $\d_i$ refers to the $i^{th}$ atom in $\D$. Assuming the above is satisfied, the stability of both the OMP and BP was proven in \cite{Tropp2006}. Moreover, in an effort to provide a more intuitive result, the ERC was shown to hold whenever the total number of non-zeros in $\mathcal{T}$ is less than a certain number, which is a function of the mutual coherence and the noise level (and also the value of the smallest non-zero coefficient, in the case of the OMP).

\subsection{The Convolutional Sparse Model}
\label{Sec:Conv_Model}
We now briefly review the structure of the convolutional sparse model along with the main results from part I of this work. Consider an $N$-dimensional signal $\X=\D\Gama$, where $\D$ is a concatenation of $m$ banded and Circulant matrices, each corresponding to an $n$-dimensional filter in all possible shifts. From another perspective, the $N\times Nm$ dictionary $\D$ can be understood as shifted versions of a local dictionary $\D_L$ of size $n\times m$. Looking at the system of equations corresponding to the $i^{th}$ patch $\x_i$, extracted from the global system through the operator $\R_i$, one can write $\x_i=\R_i \X =\O\gama_i$. The sparse vector $\gama_i$, which is a stripe of length $(2n-1)m$ extracted from $\Gama$, and the corresponding stripe-dictionary $\O$ of size $n\times (2n-1)m$, are both presented in Figure \ref{PartialStripe}, which summarizes this construction. We follow the notation introduced in part I \cite{Papyan2016_1}, and refer the reader to the detailed description therein. As in the preceeding part, we choose to denote global vector with capital letters and local ones with lowercase.

\begin{figure}[t]
	\centering
	\includegraphics[trim = 0 0 50 0 ,width=0.47\textwidth]{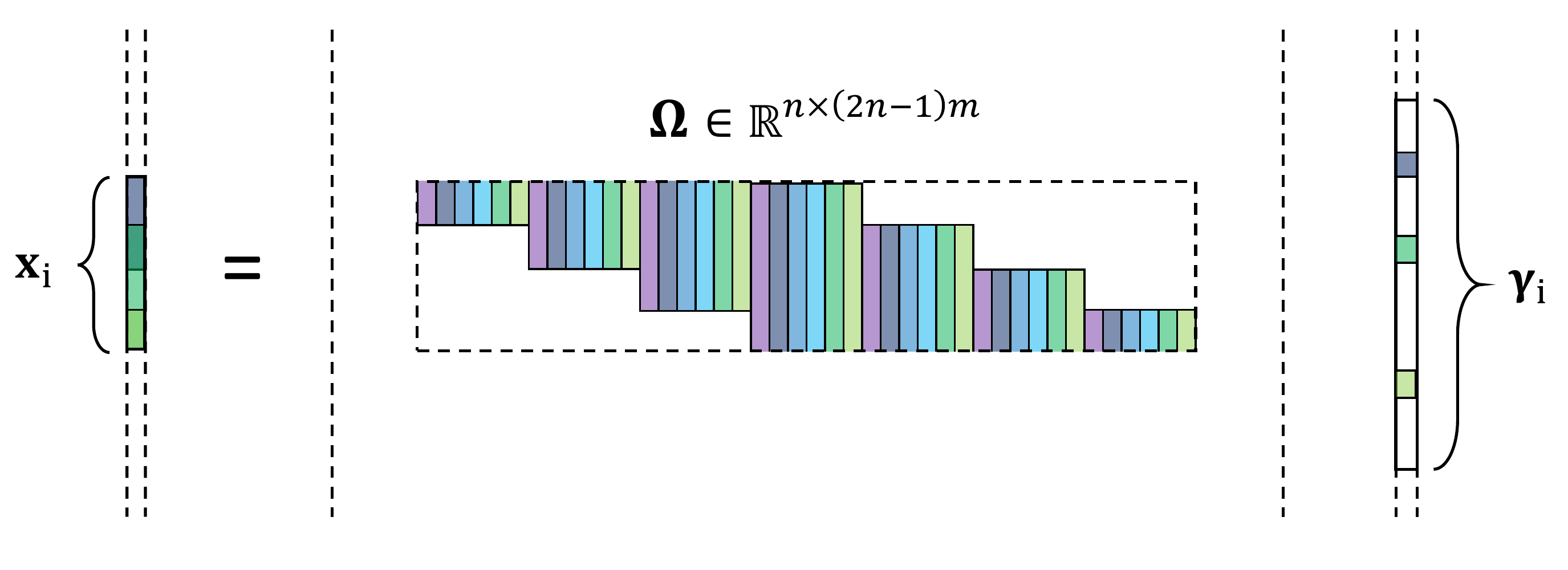}
	\caption{A local stripe from the global system of equations, exhibiting the construction of a patch $\x_i$ in terms of the stripe-dictionary $\O$ and the stripe vector $\gama_i$.} 
	\label{PartialStripe}
	\vspace{-0.3cm}
\end{figure}

In the first part of this work, we have defined the $\Loi$ norm of the sparse vector $\Gama$ to be the maximal $\ell_0$ norm of a stripe $\gama_i$ extracted from it. Formally, this can be written as
\begin{equation}
\|\Gama\|_{0,\infty} = \max_i \|\gama_i\|_0.
\end{equation}
This, in turn, gave rise to the definition of the $\Poi$ problem, where one seeks for the sparsest representation $\Gama$ (in the $\Loi$ sense) of the signal $\X$; i.e.,
\begin{equation}
(\Poi): \quad \min_\Gama \quad \|\Gama\|_{0,\infty} \ \text{ s.t. }\ \D\Gama=\X.
\end{equation}
As it was described in the first part of this work, the shift from the traditional $P_0$ problem to the new $\Poi$ brings about a fundamental advantage in terms of the theoretical guarantees one can provide. In particular, given a solution with a sufficiently small $\Loi$ norm, one can claim the uniqueness of said solution in solving the $\Poi$ problem. Moreover, under the same condition, one is guaranteed to recover this unique minimizer by employing classical pursuit algorithms, such as OMP and BP.

\section{From Global to Local Stability Analysis } \label{Sec:Global2Local}

Assume a clean signal $\X$, which admits a sparse representation $\Gama$ in terms of the convolutional dictionary $\D$, is contaminated with noise $\E$ (of bounded energy, $\| \E \|_2 \leq \epsilon$)  to create $\Y=\D\Gama+\E$. Given this noisy signal, one could propose to recover the true representation $\Gama$, or a vector close to it, by solving the $P_0^\epsilon$ problem. In this context, as mentioned in the previous section, several theoretical guarantees have been proposed in the literature. As an example, consider the stability results presented in the seminal work of \cite{Donoho2006}. Therein, it was shown that assuming the total number of non-zeros in $\Gama$ is less than $\frac{1}{2}\left(1+\frac{1}{\mu(\D)}\right)$, the distance between the solution to the $P_0^\epsilon$ problem, $\overline{\Gama}$, and the true sparse vector, $\Gama$, satisfies
\begin{equation}
	\|\overline{\Gama} - \Gama\|_2^2 \leq \frac{4\epsilon^2}{1-\mu(\D)(2\|\Gama\|_0-1)}.
	\label{Eq:OriginalStability}
\end{equation}
In the context of our convolutional setting, however, this result provides a weak bound as it constrains the total number of non-zeros to be below a certain threshold. Based on the Welch bound \cite{Welch1974}, it was shown in part I that the maximal number of non-zeros allowed globally in $\Gama$ scales as $O(\sqrt{n})$ -- no matter how large the global dimension $N$ is. This illustrates the futility of the $P_0^\epsilon$ problem in the convolutional framework and the need for an alternative analysis.

Similar to what was done in part I, we begin by re-defining the $P_0^{\epsilon}$ problem into another one, which captures the convolutional structure by relying on the $\Loi$ norm instead. Consider the problem:
\begin{equation}
\quad (\Poie): \quad \underset{\Gama}{\min} \quad \|\Gama\|_{0,\infty} \ \text{ s.t. }\ \|\Y-\D\Gama\|^2_2\leq\epsilon^2.
\end{equation}
In words, given a noisy measurement $\Y$, we seek for the $\Loi$-sparsest representation vector that explains this signal up to an $\epsilon$ error. In what follows, we address the theoretical aspects of this problem and, in particular, study the stability of its solutions and practical yet secured ways for retrieving them.

\section{Theoretical Analysis}
\label{Sec:TheoreticalAnalysis}

\subsection{Stability of the $\Poie$ Problem}
As expected, one cannot guarantee the uniqueness of the solution to the $\Poie$ problem, as was done for the $\Poi$ in part I. Instead, in this subsection we shall provide a stability claim that guarantees the found solution to be close to the underlying sparse vector that generated $\Y$. In order to provide such an analysis, we commence by arming ourselves with the necessary mathematical tools. 

\begin{defn}
Let $\D$ be a convolutional dictionary. Consider all the sub matrices $\D_\mathcal{T}$, obtained by restricting the dictionary $\D$ to a support $\mathcal{T}$ with an $\Loi$ norm equal to $k$. Define $\delta_k$ as the smallest quantity such that
\begin{equation}
\forall \Delt \quad (1-\delta_k)\|\Delt\|_2^2\leq\|\D_\mathcal{T} \Delt\|_2^2\leq(1+\delta_k)\|\Delt\|_2^2
\end{equation}
holds true for any choice of the support. Then, $\D$ is said to satisfy $k$-SRIP (Stripe-RIP) with constant $\delta_k$.
\end{defn}

Given a matrix $\D$, similar to the Stripe-Spark, computing the SRIP is hard or practically impossible. Thus bounding it using the mutual coherence is of practical use.

\begin{thm}{(Upper bounding the SRIP via the mutual coherence):}
	For a convolutional dictionary $\D$ with global mutual coherence $\mu(\D)$, the SRIP can be upper-bounded by
	\begin{equation}
	\delta_k\leq(k-1)\mu(\D).
	\end{equation}
\end{thm}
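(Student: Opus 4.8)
The plan is to follow the classical route of translating the SRIP into a statement about the eigenvalues of the Gram matrix $\D_{\mathcal{T}}^{\T}\D_{\mathcal{T}}$, and then to control these eigenvalues with the Gershgorin disc theorem, crucially exploiting the convolutional structure so that the off-diagonal mass of each row is bounded by the \emph{stripe} sparsity $k$ rather than by the (possibly huge) total cardinality of $\mathcal{T}$.

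First I would observe that for any $\Delt$ supported on $\mathcal{T}$ we have $\|\D_{\mathcal{T}}\Delt\|_2^2 = \Delt^{\T}(\D_{\mathcal{T}}^{\T}\D_{\mathcal{T}})\Delt$, so the smallest constant $\delta_k$ for which $(1-\delta_k)\|\Delt\|_2^2 \le \|\D_{\mathcal{T}}\Delt\|_2^2 \le (1+\delta_k)\|\Delt\|_2^2$ holds equals $\max_{\mathcal{T}} \max\{\,|\lambda_{\max}(\D_{\mathcal{T}}^{\T}\D_{\mathcal{T}})-1|,\,|\lambda_{\min}(\D_{\mathcal{T}}^{\T}\D_{\mathcal{T}})-1|\,\}$, where the outer maximum runs over all supports $\mathcal{T}$ with $\Loi$ norm $k$. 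Hence it suffices to show that every eigenvalue of every such $\D_{\mathcal{T}}^{\T}\D_{\mathcal{T}}$ lies in the interval $[\,1-(k-1)\mu(\D),\,1+(k-1)\mu(\D)\,]$.

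Next I would invoke Gershgorin's theorem: each eigenvalue of $\D_{\mathcal{T}}^{\T}\D_{\mathcal{T}}$ lies within a disc centered at some diagonal entry, of radius equal to the sum of the absolute values of the off-diagonal entries in that row. Since the atoms of $\D$ are normalized, every diagonal entry equals $1$, and the $(i,j)$ off-diagonal entry is the inner product $\langle\d_i,\d_j\rangle$, whose magnitude is at most $\mu(\D)$ by definition of the global mutual coherence. The central point is then to count how many of these off-diagonal entries in a given row can be nonzero. Here the convolutional model enters: because $\D$ is a concatenation of banded Circulant matrices, a fixed atom $\d_i$ has a nonzero correlation only with atoms whose $n$-sample supports intersect its own, and \emph{all} such atoms lie in a single stripe around $\d_i$ (of length $(2n-1)m$, exactly the stripes of Part I). Since $\mathcal{T}$ has $\Loi$ norm $k$, that stripe contains at most $k$ atoms of $\mathcal{T}$, hence at most $k-1$ besides $\d_i$ itself. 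Therefore each Gershgorin radius is at most $(k-1)\mu(\D)$, all eigenvalues lie in the claimed interval, and $\delta_k \le (k-1)\mu(\D)$.

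The main obstacle I anticipate is the bookkeeping in this counting step: one must argue carefully, from the banded-Circulant structure and the precise stripe definition of Part I, that every atom correlated with a fixed $\d_i$ is captured by the stripe which governs the local $\Loi$ constraint, so that the $k-1$ bound is valid uniformly in $N$ and in $|\mathcal{T}|$. The remaining ingredients — the reduction to eigenvalues of the Gram matrix, the normalization of the diagonal, and the coherence bound on each off-diagonal term — are routine.
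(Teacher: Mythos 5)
Your proposal is correct and follows essentially the same route as the paper: the paper reduces the SRIP constant to the extremal eigenvalues of the Gram matrix $\D_{\mathcal{T}}^T\D_{\mathcal{T}}$ and then invokes Lemma 1 of Part I for the bound $1\pm(k-1)\mu(\D)$, and that lemma is precisely the Gershgorin-plus-locality counting argument you spell out inline (each atom overlaps only atoms within its own stripe, of which at most $k-1$ belong to $\mathcal{T}$). No gap.
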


\begin{proof}
Consider the sub-dictionary $\D_{\mathcal{T}}$, obtained by restricting the columns of $\D$ to a support $\mathcal{T}$ with $\Loi$ norm equal to $k$. Lemma 1 in part I \cite{Papyan2016_1} states that the eigenvalues of the Gram matrix $\D_{\mathcal{T}}^T\D_{\mathcal{T}}$ are bounded by
\begin{equation} \label{Gersh}
1-(k-1)\mu(\D)\leq\lambda_i(\D_{\mathcal{T}}^T\D_{\mathcal{T}})\leq 1+(k-1)\mu(\D).
\end{equation}
Now, for every $\Delt$ we have that
\begin{align*}
(1-(k-1)\mu(\D))\|\Delt\|_2^2 \leq & \lambda_{min}(\D_{\mathcal{T}}^T\D_{\mathcal{T}})\|\Delt\|_2^2 \\
\leq & \|\D_{\mathcal{T}} \Delt\|_2^2 \leq \lambda_{max}(\D_{\mathcal{T}}^T\D_{\mathcal{T}})\|\Delt\|_2^2 \\
\leq & (1+(k-1)\mu(\D))\|\Delt\|_2^2,
\end{align*}
where $\lambda_{max}$ and $\lambda_{min}$ are the maximal and minimal eigenvalues, respectively.
As a result, we obtain that $\delta_k\leq(k-1)\mu(\D)$.
\end{proof}

Assume a sparse vector $\Gama$ is multiplied by $\D$ and then contaminated by a vector $\E$, generating the signal $\Y=\D\Gama+\E$, such that $\|\Y-\D\Gama\|_2^2\leq\epsilon^2$. Suppose we solve the $\Poie$ problem and obtain a solution $\hat{\Gama}$. How close is this solution to the original $\Gama$? The following theorem provides an answer to this question.

\begin{thm}{(Stability of the solution to the $\Poie$ problem):}
Consider a sparse vector $\Gama$ such that $\|\Gama\|_{0,\infty} = k < \frac{1}{2}\left( 1 + \frac{1}{\mu(\D)} \right) $, and a convolutional dictionary $\D$ satisfying the SRIP property for $\Loi=2k$ with coefficient $\delta_{2k}$. Then, the distance between the true sparse vector $\Gama$ and the solution to the $\Poie$ problem $\hat{\Gama}$ is bounded by
\begin{equation}
\|\Gama-\hat{\Gama}\|_2^2\leq \frac{4\epsilon^2}{1-\delta_{2k}}\leq\frac{4\epsilon^2}{1-(2k-1)\mu(\D)}.	
\label{Eq:NewStability}
\end{equation}
\end{thm}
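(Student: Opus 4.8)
The plan is to adapt the classical stability argument for the $P_0^\epsilon$ problem, trading the $\ell_0$ norm for the $\Loi$ norm and the RIP for the SRIP just introduced. The first step I would take is to note that the generating vector $\Gama$ is itself feasible for the $\Poie$ problem, since by assumption $\|\Y-\D\Gama\|_2^2=\|\E\|_2^2\le\epsilon^2$. Because $\hat{\Gama}$ minimizes $\|\cdot\|_{0,\infty}$ over that feasible set, this immediately gives $\|\hat{\Gama}\|_{0,\infty}\le\|\Gama\|_{0,\infty}=k$.

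Next I would introduce the difference vector $\Delt=\hat{\Gama}-\Gama$ and control its stripe sparsity. Since the stripe-extraction operator is linear, the $i$-th stripe of $\Delt$ is the difference of the $i$-th stripes of $\hat{\Gama}$ and $\Gama$, so by subadditivity of the $\ell_0$ count one has $\|\Delt_i\|_0\le\|\hat{\Gama}_i\|_0+\|\Gama_i\|_0\le 2k$ for every $i$; taking the maximum over $i$ yields $\|\Delt\|_{0,\infty}\le 2k$. Hence $\D\Delt=\D_\mathcal{T}\Delt$ for a support $\mathcal{T}$ of $\Loi$ norm $2k$ containing that of $\Delt$ (any support of $\Loi$ norm at most $2k$ may be enlarged to one of $\Loi$ norm exactly $2k$, so nothing is lost), and the left-hand $2k$-SRIP inequality applies: $(1-\delta_{2k})\|\Delt\|_2^2\le\|\D\Delt\|_2^2$.

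It then remains to bound $\|\D\Delt\|_2$. Writing $\D\Delt=(\D\hat{\Gama}-\Y)-(\D\Gama-\Y)$ and applying the triangle inequality, together with the fact that both $\hat{\Gama}$ and $\Gama$ satisfy the $\epsilon$-constraint of $\Poie$, I obtain $\|\D\Delt\|_2\le 2\epsilon$, so $\|\D\Delt\|_2^2\le 4\epsilon^2$. Combining this with the previous display and rearranging gives $\|\Delt\|_2^2\le 4\epsilon^2/(1-\delta_{2k})$, the first inequality in \eqref{Eq:NewStability}; the second follows at once from the SRIP--coherence bound $\delta_{2k}\le(2k-1)\mu(\D)$ proved above. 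The hypothesis $k<\tfrac12\!\left(1+\tfrac{1}{\mu(\D)}\right)$ enters precisely here: it is equivalent to $(2k-1)\mu(\D)<1$, which guarantees that both denominators are strictly positive, so the bound is non-vacuous.

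The argument is short and I do not anticipate a real obstacle; the only step where the convolutional structure genuinely matters — and hence the one requiring the most care — is the estimate $\|\Delt\|_{0,\infty}\le 2k$, where one must invoke stripe-wise $\ell_0$ subadditivity and verify that the global $2k$-SRIP indeed governs every support that can arise as a union of two supports of $\Loi$ norm $k$.
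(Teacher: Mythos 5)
Your proposal is correct and follows essentially the same route as the paper's proof: feasibility of $\Gama$ gives $\|\hat{\Gama}\|_{0,\infty}\le k$, the triangle inequality (for both the $\ell_2$ and the $\Loi$ norms) gives $\|\D\Delt\|_2^2\le 4\epsilon^2$ and $\|\Delt\|_{0,\infty}\le 2k$, and the $2k$-SRIP together with the coherence bound $\delta_{2k}\le(2k-1)\mu(\D)$ closes the argument. Your extra remark on stripe-wise $\ell_0$ subadditivity and enlarging the support to $\Loi$ norm exactly $2k$ is a careful (and correct) elaboration of what the paper states more tersely.
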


\begin{proof}
The solution to the $\Poie$ problem satisfies $\|\Y-\D\hat{\Gama}\|_2^2\leq\epsilon^2$, and it must also satisfy $\|\hat{\Gama}\|_{0,\infty}\leq\|\Gama\|_{0,\infty}$ (since $\hat{\Gama}$ is the solution with the minimal $\Loi$ norm).
Defining $\Delt=\Gama-\hat{\Gama}$, using the triangle inequality, we have that $\|\D\Delt\|_2^2=\|\D\Gama-\Y+\Y-\D\hat{\Gama}\|_2^2\leq4\epsilon^2$. Furthermore, since the $\Loi$ norm satisfies the triangle inequality as well, we have that $\|\Delt\|_{0,\infty}=\|\Gama-\hat{\Gama}\|_{0,\infty}\leq\|\Gama\|_{0,\infty}+\|\hat{\Gama}\|_{0,\infty} \leq 2k$.

Using the SRIP of $\D$, we have that
\begin{equation}
(1-\delta_{2k})\|\Delt\|_2^2\leq\|\D\Delt\|_2^2\leq 4\epsilon^2,
\end{equation}
where in the first inequality we have used the lower bound provided by the definition of the SRIP. Finally, we obtain the following stability claim:
\begin{equation}
\|\Delt\|_2^2=\|\Gama-\hat{\Gama}\|_2^2\leq \frac{4\epsilon^2}{1-\delta_{2k}}.
\end{equation}
Using our bound of the SRIP in terms of the mutual coherence, we obtain that
\begin{equation}
\|\Delt\|_2^2=\|\Gama-\hat{\Gama}\|_2^2\leq \frac{4\epsilon^2}{1-\delta_{2k}}\leq\frac{4\epsilon^2}{1-(2k-1)\mu(\D)}.
\end{equation}
For the last inequality to hold, we have assumed $k = \|\Gama\|_{0,\infty}<\frac{1}{2}(1+\frac{1}{\mu(\D)})$.

\end{proof}

One should wonder if the new guarantee presents any advantage when compared to the bound based on the traditional RIP. Looking at the original stability claim for the global system, as discussed in Section \ref{Sec:Global2Local}, 
the reader should compare the assumptions on the sparse vector $\Gama$, as well as the obtained bounds on the distance between the estimates and the original vector. The stability claim in the $P_0^\epsilon$ problem is valid under the condition
\begin{equation}
\|\Gama\|_0<\frac{1}{2}\left(1+\frac{1}{\mu(\D)}\right).
\end{equation}
In contrast, the stability claim presented above holds whenever
\begin{equation}
\|\Gama\|_{0,\infty}<\frac{1}{2}\left(1+\frac{1}{\mu(\D)}\right).
\end{equation}
This allows for significantly more non-zeros in the global signal, as thoroughly discussed in part I. Furthermore, as long as the above hold and comparing Equations \eqref{Eq:OriginalStability} and \eqref{Eq:NewStability}, we have that
\begin{equation}
\frac{4\epsilon^2}{1-(2\|\Gama\|_{0,\infty}-1)\mu(\D)} \ll \frac{4\epsilon^2}{1-(2\|\Gama\|_0-1)\mu(\D)},
\end{equation}
since generally $\|\Gama\|_{0,\infty}\ll\|\Gama\|_0$. This inequality implies that the above developed bound is (usually much) lower than the traditional one. In other words, the bound on the distance to the true sparse vector is much tighter and far more informative under the $\Loi$ setting.

\vspace{-0.1cm}
\subsection{Stability Guarantee of OMP}
Hitherto, we have shown that the solution to the $\Poie$ problem will be close to the true sparse vector $\Gama$. However, it is also important to know whether this solution can be approximated by pursuit algorithms. In this subsection, we address such a question for the OMP, extending the analysis presented in part I to the noisy setting.

In \cite{Donoho2006}, a claim was provided for the OMP, guaranteeing the recovery of the true support of the underlying solution if
\begin{equation}
	\|\Gama\|_0 < \frac{1}{2}\left(1+\frac{1}{\mu(\D)}\right) - \frac{1}{\mu(\D)}\cdot\frac{\epsilon}{|\Gamma_{min}|},
\end{equation}
$|\Gamma_{min}|$ being the minimal absolute value of the (non-zero) coefficients in $\Gama$. This result comes to show the importance of both the sparsity of $\Gama$ and the signal-to-noise ratio, which relates to the term ${\epsilon}/{|\Gamma_{min}|}$. Nevertheless, in the context of our convolutional setting, this result provides a weak bound for two different reason. First, note that the above bound restricts the total number of non-zeros in the representation of the signal. Following the results from part I, it is natural to seek for an alternative condition for the success of this pursuit relying on the $\Loi$ norm instead. Second, notice that the rightmost term in the above bound divides the global error energy by the minimal coefficient (in absolute value) in $\Gama$. In the convolutional scenario, the energy of the error $\epsilon$ is a \textit{global} quantity, while the minimal coefficient $|\Gamma_{min}|$ is a \textit{local} one -- thus making this term enormous, and the corresponding bound nearly meaningless. As we show next, one can harness the inherent locality of the atoms in order to replace the global quantity in the numerator with a local one: $\epsilon_L$.

\begin{thm}{(Stable recovery of global OMP in the presence of noise):} \label{Theorem:StabilityOMP}
	Suppose a clean signal $\X$ has a representation $\D\Gama$, and that it is contaminated with noise $\E$ to create the signal $\Y=\X+\E$, such that $\|\Y-\X\|_2\leq\epsilon$. Denote by $\epsilon_{_L}$ the highest energy of all $n$-dimensional local patches extracted from $\E$. Assume $\Gama$ satisfies
	\begin{equation} \label{omp_hypothesis}
	\|\Gama\|_{0,\infty} < \frac{1}{2}\left( 1+\frac{1}{\mu(\D)} \right)-\frac{1}{\mu(\D)}\cdot\frac{\epsilon_{_L}}{|\Gamma_{min}|},
	\end{equation}
	where $|\Gamma_{min}|$ is the minimal entry in absolute value of the sparse vector $\Gama$.
	Denoting by $\Gama_\text{OMP}$ the solution obtained by running OMP for $\|\Gama\|_0$ iterations, we are guaranteed that
	\begin{enumerate}[ a) ]
	\item OMP will find the correct support; And,
	\item $\|\Gama_\text{OMP}-\Gama\|_2^2\leq\frac{\epsilon^2}{1-\mu(\|\Gama\|_{0,\infty}-1)}$.
	\end{enumerate}
\end{thm}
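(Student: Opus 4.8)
For part a) I would argue by induction on the iteration index $k$, maintaining as hypothesis that after $k$ steps the selected support $\mathcal{T}_k$ is contained in the true support $\mathcal{T}=\mathrm{supp}(\Gama)$; the whole argument is a convolutional adaptation of the classical Donoho--Elad--Temlyakov analysis of noisy OMP, where global sparsity is replaced by the $\Loi$ norm and the global noise level $\epsilon$ by its local surrogate $\epsilon_{_L}$ at every point where a correlation with a single (local) atom is taken. Let $\Gama^{k}$ denote the least-squares coefficient vector supported on $\mathcal{T}_k$, so the residual is $\r^{k}=\Y-\D\Gama^{k}=\D\Delt^{k}+\E$ with $\Delt^{k}=\Gama-\Gama^{k}$. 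Two elementary observations carry the locality: since $\mathcal{T}_k\subseteq\mathcal{T}$ we have $\mathrm{supp}(\Delt^{k})\subseteq\mathcal{T}$, hence $\|\Delt^{k}\|_{0,\infty}\le\|\Gama\|_{0,\infty}$; and since $\Gama^{k}$ vanishes on $\mathcal{T}\setminus\mathcal{T}_k$, the entries of $\Delt^{k}$ there coincide with those of $\Gama$, so $\|\Delt^{k}\|_{\infty}\ge|\Gamma_{min}|$ whenever $\mathcal{T}_k\subsetneq\mathcal{T}$.

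Next I would run the standard ``greedy step is safe'' comparison: OMP selects a correct atom as long as $\max_{i\in\mathcal{T}}|\d_i^{T}\r^{k}|>\max_{j\notin\mathcal{T}}|\d_j^{T}\r^{k}|$. For the right-hand side I write $\d_j^{T}\r^{k}=\d_j^{T}\D_{\mathcal{T}}\Delt^{k}+\d_j^{T}\E$; the noise term satisfies $|\d_j^{T}\E|\le\epsilon_{_L}$ because $\d_j$ is a unit-norm atom supported inside a single $n$-sample window, and the first term is at most $\mu(\D)$ times the $\ell_1$ mass of $\Delt^{k}$ over the stripe of atoms overlapping $\d_j$, i.e. at most $\mu(\D)\,\|\Gama\|_{0,\infty}\,\|\Delt^{k}\|_{\infty}$ (all these atoms are distinct from $j$, hence ``off-diagonal''). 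For the left-hand side I evaluate at the index $i^{\star}$ achieving $\|\Delt^{k}\|_{\infty}$, use $\D_{\mathcal{T}}^{T}\D_{\mathcal{T}}=I+W$ with $W$ hollow, supported on overlapping stripes and with entries bounded by $\mu(\D)$, and obtain $|\d_{i^{\star}}^{T}\r^{k}|\ge(1-\mu(\D)(\|\Gama\|_{0,\infty}-1))\|\Delt^{k}\|_{\infty}-\epsilon_{_L}$. Combining the two bounds, a sufficient condition for the correct atom to win is $(1-\mu(\D)(2\|\Gama\|_{0,\infty}-1))\|\Delt^{k}\|_{\infty}>2\epsilon_{_L}$, and since $\|\Delt^{k}\|_{\infty}\ge|\Gamma_{min}|$ this is precisely hypothesis \eqref{omp_hypothesis} after rearrangement. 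Because $\r^{k}$ is orthogonal to the already-chosen atoms, the winning atom is necessarily new, so the support grows by one genuine element per iteration and after $\|\Gama\|_0$ steps $\mathcal{T}_{\|\Gama\|_0}=\mathcal{T}$, which is a).

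For b) I would note that, once the recovered support equals $\mathcal{T}$, the OMP estimate is $\Gama_{\text{OMP}}=\D_{\mathcal{T}}^{\dagger}\Y$ on $\mathcal{T}$ and zero elsewhere. The coherence condition forces $(\|\Gama\|_{0,\infty}-1)\mu(\D)<1$, so by Lemma~1 of part~I the Gram matrix $\D_{\mathcal{T}}^{T}\D_{\mathcal{T}}$ is positive definite with $\lambda_{\min}(\D_{\mathcal{T}}^{T}\D_{\mathcal{T}})\ge 1-\mu(\D)(\|\Gama\|_{0,\infty}-1)$; in particular $\D_{\mathcal{T}}^{\dagger}\D_{\mathcal{T}}=I$, whence $\Gama_{\text{OMP}}-\Gama=\D_{\mathcal{T}}^{\dagger}\E$ on $\mathcal{T}$. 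Therefore $\|\Gama_{\text{OMP}}-\Gama\|_2\le\|\D_{\mathcal{T}}^{\dagger}\|_{2\to 2}\|\E\|_2\le\epsilon/\sqrt{\lambda_{\min}(\D_{\mathcal{T}}^{T}\D_{\mathcal{T}})}$, and squaring and inserting the eigenvalue bound yields exactly $\|\Gama_{\text{OMP}}-\Gama\|_2^2\le\epsilon^2/(1-\mu(\D)(\|\Gama\|_{0,\infty}-1))$.

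The main obstacle, and the only genuinely new ingredient relative to the global proof, is the accounting that lets the \emph{local} noise energy $\epsilon_{_L}$ replace the \emph{global} $\epsilon$ in a): one must verify that every appearance of the noise in the per-iteration analysis is an inner product of the residual with a \emph{single} atom, so it sees only one $n$-sample window of $\E$, and that the convolutional (stripe) structure confines all coherence sums to at most $\|\Gama\|_{0,\infty}$ (respectively $\|\Gama\|_{0,\infty}-1$) terms irrespective of the ambient dimension $N$ -- this is what converts the meaningless global bound into a useful local one. The $\ell_2$-error bound in b), by contrast, legitimately retains the global $\epsilon$, since there the full vector $\D_{\mathcal{T}}^{T}\E$ (of global cardinality $\|\Gama\|_0$) is acted on by the inverse Gram matrix and the cleanest estimate goes through $\|\D_{\mathcal{T}}^{\dagger}\|_{2\to 2}$ and $\|\E\|_2\le\epsilon$.
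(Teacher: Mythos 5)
Your proposal is correct and follows essentially the same route as the paper's proof: the local bound $|\d_j^T\E|\le\epsilon_{_L}$ obtained from the compact support of each atom, the stripe-based counting that caps the coherence sums at $\|\Gama\|_{0,\infty}$ (resp. $\|\Gama\|_{0,\infty}-1$) terms, the correlation comparison yielding the threshold $(1-\mu(\D)(2\|\Gama\|_{0,\infty}-1))\|\Delt^k\|_\infty>2\epsilon_{_L}$, and the final least-squares bound through $\lambda_{\min}(\D_{\mathcal{T}}^T\D_{\mathcal{T}})$. The only difference is organizational — you run a single induction on the residual error $\Delt^k$, whereas the paper analyzes the first step and then shows each subsequent iteration reduces to an equivalent instance with $\Gama^k$ in place of $\Gama$ — but the substance is identical.
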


The proof of this theorem is presented in Appendix \ref{App:StabilityOMP}, and the derivations therein are based on the analysis presented in \cite{Donoho2006}, generalizing the study to the convolutional setting.

In the theorem presented above, we have assumed that the OMP algorithm runs for $\|\Gama\|_0$ iterations. We could also propose a different approach, however, using a stopping criterion based on the norm of the residual. Under such setting, the OMP would run until the energy of the global residual is less than the energy of the noise, given by $\epsilon^2$. 

\vspace{-0.1cm}
\subsection{Stability Guarantee of Basis Pursuit Denoising via ERC}
Although solving the $\Poi$ problem is at least as hard as solving the $P_0$ version (which is NP-hard), one can nevertheless approximate its solution using the BP algorithm by replacing the $\Loi$ norm with the convex $\ell_1$. A different and perhaps more appropriate approach could be suggested, relying on the $\ell_{1,\infty}$ norm. This, however, remains one of our future work challenges. A theoretical motivation behind the $\ell_1$ relaxation was proven in part I, showing that assuming the $\Loi$ norm of the underlying solution is low, the BP algorithm is guaranteed to find it. When moving to the noisy regime, the BP is naturally extended to the Basis Pursuit DeNoising (BPDN) algorithm\footnote{Note that an alternative to the BPDN extension is that of the Dantzig Selector algorithm. One can envision a similar analysis to the one presented here for this algorithm as well.}, which in its Lagrangian form is defined as follows
\begin{equation} \label{eq:lagrangian_BP}
\min_{\Gama} \frac{1}{2}\|\Y - \D \Gama \|^2_2 +\lambda \| \Gama \|_1.
\end{equation}
Similar to the way part I has shown how BP can be used to approximate the $\Poi$ problem, in what follows we will prove that the BPDN manages to approximate the solution to the $\Poie$ problem.

Assuming the ERC is met, the stability of BP was proven under various noise models and formulations in \cite{Tropp2006}. By exploiting the convolutional structure used throughout our analysis, we now show that the ERC is met given that the $\Loi$ norm is small, tying the aforementioned results to our story.

\begin{thm}{(ERC in the convolutional sparse model):}
\label{Theorem:ERC_Loi}
	For a convolutional dictionary $\D$ with mutual coherence $\mu(\D)$, the ERC condition is met for every support $\mathcal{T}$ that satisfies\footnote{Note that specifying the $\Loi$ of a support rather than a sparse vector is a slight abuse of notation, that we will nevertheless use for the sake of simplicity, as was done in part I.}
	\begin{equation}
		\|\mathcal{T}\|_{0,\infty} < \frac{1}{2}\left(1+\frac{1}{\mu(\D)}\right).
	\end{equation}
\end{thm}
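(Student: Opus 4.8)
The plan is to follow the classical argument of Tropp that bounds the ERC quantity $\|\D^{\dagger}_{\mathcal{T}}\d_i\|_1$ for $i\notin\mathcal{T}$, but to replace every appearance of the global cardinality by the stripe sparsity $k:=\|\mathcal{T}\|_{0,\infty}$. The leverage point is that the Gram matrix of a convolutional dictionary restricted to $\mathcal{T}$ is not merely nearly diagonal in magnitude, it is genuinely \emph{sparse}: each of its rows carries only a controlled number of nonzeros, governed by $k$ rather than by $|\mathcal{T}|$.

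Concretely, I would first write $\D^{\dagger}_{\mathcal{T}}=(\D_{\mathcal{T}}^{T}\D_{\mathcal{T}})^{-1}\D_{\mathcal{T}}^{T}$ — legitimate once we check $\D_{\mathcal{T}}$ has full column rank, which comes out of the Gram-matrix estimate below — so that $\|\D^{\dagger}_{\mathcal{T}}\d_i\|_1\le \|(\D_{\mathcal{T}}^{T}\D_{\mathcal{T}})^{-1}\|_{1\to1}\,\|\D_{\mathcal{T}}^{T}\d_i\|_1$, where $\|\cdot\|_{1\to1}$ is the matrix norm induced by the vector $\ell_1$ norm (the maximum absolute column sum). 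For the rightmost factor, note $\|\D_{\mathcal{T}}^{T}\d_i\|_1=\sum_{j\in\mathcal{T}}|\langle\d_j,\d_i\rangle|$; an atom $\d_j$ of a convolutional dictionary has nonzero inner product with $\d_i$ only when their length-$n$ supports overlap, and all atoms overlapping $\d_i$ lie inside the stripe around the location of $i$, a window of $(2n-1)m$ columns. By the definition of $\|\mathcal{T}\|_{0,\infty}$, that stripe contains at most $k$ indices of $\mathcal{T}$, so at most $k$ terms in the sum are nonzero, each bounded by $\mu(\D)$; hence $\|\D_{\mathcal{T}}^{T}\d_i\|_1\le k\,\mu(\D)$. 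For the Gram factor, write $\D_{\mathcal{T}}^{T}\D_{\mathcal{T}}=I+E$ with $E$ having zero diagonal; the same overlap reasoning shows each row of $E$ has at most $k-1$ nonzeros (the stripe around an atom $j\in\mathcal{T}$ holds at most $k$ atoms of $\mathcal{T}$, one of which is $j$ itself), each of magnitude at most $\mu(\D)$, so $\|E\|_{1\to1}\le (k-1)\mu(\D)$. Under the hypothesis we have $(k-1)\mu(\D)<1$, so a Neumann-series bound gives invertibility (hence full column rank of $\D_{\mathcal{T}}$) and $\|(\D_{\mathcal{T}}^{T}\D_{\mathcal{T}})^{-1}\|_{1\to1}\le \frac{1}{1-(k-1)\mu(\D)}$.

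Combining the two estimates yields $\|\D^{\dagger}_{\mathcal{T}}\d_i\|_1\le \frac{k\,\mu(\D)}{1-(k-1)\mu(\D)}$, and this is strictly less than $1$ precisely when $k\,\mu(\D)<1-(k-1)\mu(\D)$, i.e. $(2k-1)\mu(\D)<1$, i.e. $k<\frac12\bigl(1+\frac{1}{\mu(\D)}\bigr)$, which is exactly the assumption. Since the bound is uniform over $i\notin\mathcal{T}$, taking the maximum gives $\theta=1-\max_{i\notin\mathcal{T}}\|\D^{\dagger}_{\mathcal{T}}\d_i\|_1>0$, establishing the ERC.

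I expect the only real obstacle to be a clean justification of the sparsity of the restricted Gram matrix and of $\D_{\mathcal{T}}^{T}\d_i$ — namely that each relevant row has at most $k$ (respectively $k$) nonzero entries — since this is the single place where the convolutional architecture and the meaning of the $\Loi$ norm genuinely enter; everything else is the standard Tropp computation. This counting is already implicit in the eigenvalue bound recalled from part~I (Lemma~1), so I would either cite it directly or re-derive the one-line version specialized to absolute row sums.
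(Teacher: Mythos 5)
Your proof is correct and follows essentially the same route as the paper: the same factorization $\|\D^{\dagger}_{\mathcal{T}}\d_i\|_1\le\|(\D_{\mathcal{T}}^T\D_{\mathcal{T}})^{-1}\|_1\,\|\D_{\mathcal{T}}^T\d_i\|_1$, the same stripe-counting argument giving at most $k$ overlapping atoms, and the same final bound $\frac{k\mu(\D)}{1-(k-1)\mu(\D)}<1$. The only difference is cosmetic: you bound the inverse Gram factor via $\D_{\mathcal{T}}^T\D_{\mathcal{T}}=I+E$ and a Neumann series, whereas the paper uses the equality of induced $\ell_1$ and $\ell_\infty$ norms for symmetric matrices together with the Ahlberg--Nilson--Varah bound for diagonally dominant matrices; both yield the identical estimate $\frac{1}{1-(k-1)\mu(\D)}$.
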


Based on this and the analysis presented in \cite{Tropp2006}, we present a stability claim for the Lagrangian formulation of the BP problem as stated in Equation \eqref{eq:lagrangian_BP}.
\begin{thm}{(Stable recovery of global Basis Pursuit in the presence of noise):} \label{Theorem:StabilityBP}
	Suppose a clean signal $\X$ has a representation $\D\Gama$, and that it is contaminated with noise $\E$ to create the signal $\Y=\X+\E$. Denote by $\epsilon_{_L}$ the highest energy of all $n$-dimensional local patches extracted from $\E$. Assume $\Gama$ satisfies
	\begin{equation} \label{eq:BP_assumption}
	\|\Gama\|_{0,\infty}<\frac{1}{3} \left( 1 + \frac{1}{\mu(\D)} \right).
	\end{equation}
	Denoting by $\Gama_{\text{BP}}$ the solution to the Lagrangian BP formulation with parameter $\lambda=4\epsilon_L$, we are guaranteed that
	\begin{enumerate}
	\item The support of $\Gama_{\text{BP}}$ is contained in that of $\Gama$.
	\item $\|\Gama_{\text{BP}}-\Gama\|_\infty<\frac{15}{2}\epsilon_L$.
	\item In particular, the support of $\Gama_{\text{BP}}$ contains every index $i$ for which $|\Gamma_i|>\frac{15}{2}\epsilon_L$.
	\item The minimizer of the problem, $\Gama_{\text{BP}}$, is unique.
	\end{enumerate}
\end{thm}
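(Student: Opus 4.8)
The plan is to adapt the \emph{oracle}-based argument of \cite{Tropp2006} to the convolutional model, the crucial twist being that every \emph{global} noise quantity appearing in the classical proof must be replaced by its \emph{local} analogue $\epsilon_{_L}$. Let $\mathcal{T}$ be the support of $\Gama$, so that $\|\mathcal{T}\|_{0,\infty}=\|\Gama\|_{0,\infty}=k$. Since $k<\tfrac13(1+\tfrac1{\mu(\D)})<\tfrac12(1+\tfrac1{\mu(\D)})$, the eigenvalue bound \eqref{Gersh} shows that $\D_\mathcal{T}^T\D_\mathcal{T}$ is positive definite; hence the \emph{restricted} problem, namely \eqref{eq:lagrangian_BP} with the minimization confined to vectors supported on $\mathcal{T}$, is strictly convex and has a unique minimizer $\tilde\Gama$. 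Writing the subgradient optimality condition of this restricted problem and substituting $\Y=\D_\mathcal{T}\Gama_\mathcal{T}+\E$, I would obtain the closed form $\tilde\Gama_\mathcal{T}-\Gama_\mathcal{T}=(\D_\mathcal{T}^T\D_\mathcal{T})^{-1}\D_\mathcal{T}^T\E-\lambda(\D_\mathcal{T}^T\D_\mathcal{T})^{-1}\z$, where $\z\in\partial\|\tilde\Gama_\mathcal{T}\|_1$ and hence $\|\z\|_\infty\le1$.

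Next I would establish the $\ell_\infty$ estimate, which will give claims (2) and (3). Two quantities must be controlled in the identity above. The first, $\|\D_\mathcal{T}^T\E\|_\infty$, is where locality enters: each atom $\d_i$ is unit-norm and supported on an $n$-sample patch, so $|\d_i^T\E|\le\|\d_i\|_2\,\epsilon_{_L}=\epsilon_{_L}$, giving $\|\D_\mathcal{T}^T\E\|_\infty\le\epsilon_{_L}$. For the second, $\|(\D_\mathcal{T}^T\D_\mathcal{T})^{-1}\|_{\infty\to\infty}$, the same stripe-counting that underlies \eqref{Gersh} --- any window of $2n-1$ consecutive atom offsets is itself a stripe and therefore meets $\mathcal{T}$ in at most $k$ atoms --- bounds every off-diagonal row-sum of $\D_\mathcal{T}^T\D_\mathcal{T}$ by $(k-1)\mu(\D)$, so a Neumann-series estimate yields $\|(\D_\mathcal{T}^T\D_\mathcal{T})^{-1}\|_{\infty\to\infty}\le\frac1{1-(k-1)\mu(\D)}<\frac32$ under the hypothesis on $k$. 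Combining these bounds with $\|\z\|_\infty\le1$ and $\lambda=4\epsilon_{_L}$ gives $\|\tilde\Gama-\Gama\|_\infty<\frac32(\epsilon_{_L}+\lambda)=\frac{15}{2}\epsilon_{_L}$; claim (3) is then immediate, since $|\Gamma_i|>\frac{15}{2}\epsilon_{_L}$ forces $\tilde\Gamma_i\neq0$.

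It then remains to promote $\tilde\Gama$ from restricted to global minimizer of \eqref{eq:lagrangian_BP}, which simultaneously gives claims (1) and (4). Following \cite{Tropp2006}, this reduces to checking the subgradient inequality $|\d_i^T(\Y-\D\tilde\Gama)|\le\lambda$ for every $i\notin\mathcal{T}$; a \emph{strict} inequality there, together with the strict convexity already noted on $\mathcal{T}$, delivers uniqueness. Using the closed form above, the residual is $\Y-\D\tilde\Gama=\E-\D_\mathcal{T}(\tilde\Gama_\mathcal{T}-\Gama_\mathcal{T})$, so that, with $\D_\mathcal{T}^\dagger=(\D_\mathcal{T}^T\D_\mathcal{T})^{-1}\D_\mathcal{T}^T$, one has $\d_i^T(\Y-\D\tilde\Gama)=\d_i^T\E-(\D_\mathcal{T}^\dagger\d_i)^T(\D_\mathcal{T}^T\E)+\lambda(\D_\mathcal{T}^\dagger\d_i)^T\z$. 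The first two terms are bounded --- again \emph{locally} --- by $\epsilon_{_L}(1+\|\D_\mathcal{T}^\dagger\d_i\|_1)$ and the last by $\lambda\|\D_\mathcal{T}^\dagger\d_i\|_1$. Invoking Theorem \ref{Theorem:ERC_Loi} --- more precisely the coherence estimate $\|\D_\mathcal{T}^\dagger\d_i\|_1\le\frac{k\mu(\D)}{1-(k-1)\mu(\D)}$ that its proof supplies --- together with $k<\frac13(1+\frac1{\mu(\D)})$, one gets $\|\D_\mathcal{T}^\dagger\d_i\|_1<\frac12$, whence $|\d_i^T(\Y-\D\tilde\Gama)|<\frac32\epsilon_{_L}+\frac12\lambda=\frac72\epsilon_{_L}<4\epsilon_{_L}=\lambda$, as needed.

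The step I expect to be the main obstacle is this last, off-support, subgradient verification. The naive move --- bounding the noise contribution to the residual by the \emph{global} $\|\E\|_2$ --- would reinstate the global error and nullify the whole point; instead one must expand $\d_i^T(\Y-\D\tilde\Gama)$ into inner products of \emph{single} atoms against $\E$ so that only $n$-sample patches of $\E$ enter, yielding a bound in $\epsilon_{_L}$ alone. Forcing the two competing contributions --- the ERC term, which scales with $\lambda$, and the noise term, which scales with $\epsilon_{_L}$ --- to sum to strictly less than $\lambda$ is precisely what tightens the sparsity requirement to the factor $\tfrac13$ in \eqref{eq:BP_assumption} (versus $\tfrac12$ in Theorem \ref{Theorem:ERC_Loi}) and pins the choice $\lambda=4\epsilon_{_L}$; the remaining manipulations are routine bookkeeping with the mutual coherence and the stripe-counting lemmas already available from part~I.
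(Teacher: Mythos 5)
Your proposal is correct, and it reaches every claim of the theorem with the right constants, but it takes a different route from the paper: you re-derive the content of Tropp's result from scratch via the restricted (``oracle'') problem and its subgradient optimality conditions, whereas the paper invokes Tropp's Theorem~\ref{Theorem:Tropp} as a black box and only verifies its two hypotheses -- the ERC (via Theorem~\ref{Theorem:ERC_Loi}) and the correlation condition $\|\D^T(\Y-\X_{\text{LS}})\|_\infty\leq\lambda\theta$ (via the dedicated Lemma~\ref{lemma:noise_correlation}, which gives the bound $2\epsilon_L$) -- before post-processing its conclusions through $\|\Gama_{\text{BP}}-\Gama_{\text{LS}}\|_\infty<6\epsilon_L$ and $\|\Gama_{\text{LS}}-\Gama\|_\infty<\tfrac{3}{2}\epsilon_L$. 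The two arguments rest on exactly the same three local estimates ($\|\D_\mathcal{T}^T\E\|_\infty\leq\epsilon_L$ by atom locality, $\|(\D_\mathcal{T}^T\D_\mathcal{T})^{-1}\|_\infty<\tfrac32$ as in \eqref{eq:ERC_gram_infinity_bound}, and $\|\D_\mathcal{T}^\dagger\d_i\|_1<\tfrac12$, which is equivalent to $\theta>\tfrac12$ and to the $\tfrac13$ sparsity hypothesis), and your closed form $\tilde\Gama_\mathcal{T}-\Gama_\mathcal{T}=(\D_\mathcal{T}^T\D_\mathcal{T})^{-1}(\D_\mathcal{T}^T\E-\lambda\z)$ is precisely the paper's two-term split in disguise, which is why you recover $\tfrac{15}{2}\epsilon_L=\tfrac32(\epsilon_L+4\epsilon_L)$ in one step. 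What your version buys is self-containedness and a transparent accounting of where $\tfrac{15}{2}$ and $\lambda=4\epsilon_L$ come from (your off-support bound $\tfrac72\epsilon_L<\lambda$ is even slightly tighter than the paper's $2\epsilon_L\le\theta\lambda$ route); what the paper's version buys is brevity and a reusable, separately stated noise-correlation lemma. The only point you leave implicit is the standard step that the fitted value $\D\Gama_{\text{BP}}$ is common to all minimizers of the Lagrangian problem, so that the strict off-support inequality forces \emph{every} minimizer to be supported in $\mathcal{T}$ before strict convexity on $\mathcal{T}$ yields uniqueness; this is routine and does not constitute a gap.
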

\noindent
The proof for both of the above, inspired by the derivations in \cite{Elad_Book} and \cite{Tropp2006}, are presented in Appendix \ref{App:ERC_Loi} and \ref{App:StabilityBP}.

The benefit of this over traditional claims is, once again, the replacement of the $\ell_0$ with the $\Loi$ norm. Moreover, this result bounds the difference between the entries in $\Gama_{\text{BP}}$ and $\Gama$ in terms of a local quantity -- the local noise level $\epsilon_L$. As a consequence, all atoms with coefficients above this local measure are guaranteed to be recovered.

The implications of the above theorem are far-reaching as it provides a sound theoretical back-bone for all works that have addressed the convolutional BP problem in its Lagrangian form \cite{Bristow2013,Wohlberg2016,Bristow2014,Heide2015,Kong2014}. Later, in Section \ref{Sec:FromGlobal2LocalProcessing}, we shall propose two algorithms for solving the global BP efficiently by working locally, and these methods would benefit from this theoretical result as well.

\subsection{Experiments}

Following the above analysis, we now provide a numerical experiment demonstrating the above obtained bounds. The global dictionary employed here is the same as the one used for the noiseless experiments in part I, with mutual coherence $\mu(\D)=0.09$, local atoms of length $n=64$ and global ones of size $N = 640$. We sample random sparse vectors with cardinality between $1$ and $500$, with entries drawn from a uniform distribution with range $\left[-a,a\right]$, for varying values of $a$. Given these vectors, we construct global signals and contaminate them with noise. The noise is sampled from a zero-mean unit-variance white Gaussian distribution, and then normalized such that $\| \E \|_2 = 0.1$.

\begin{figure}[t]
\centering
\includegraphics[trim = 50 0 30 20, width = 0.35\textwidth]{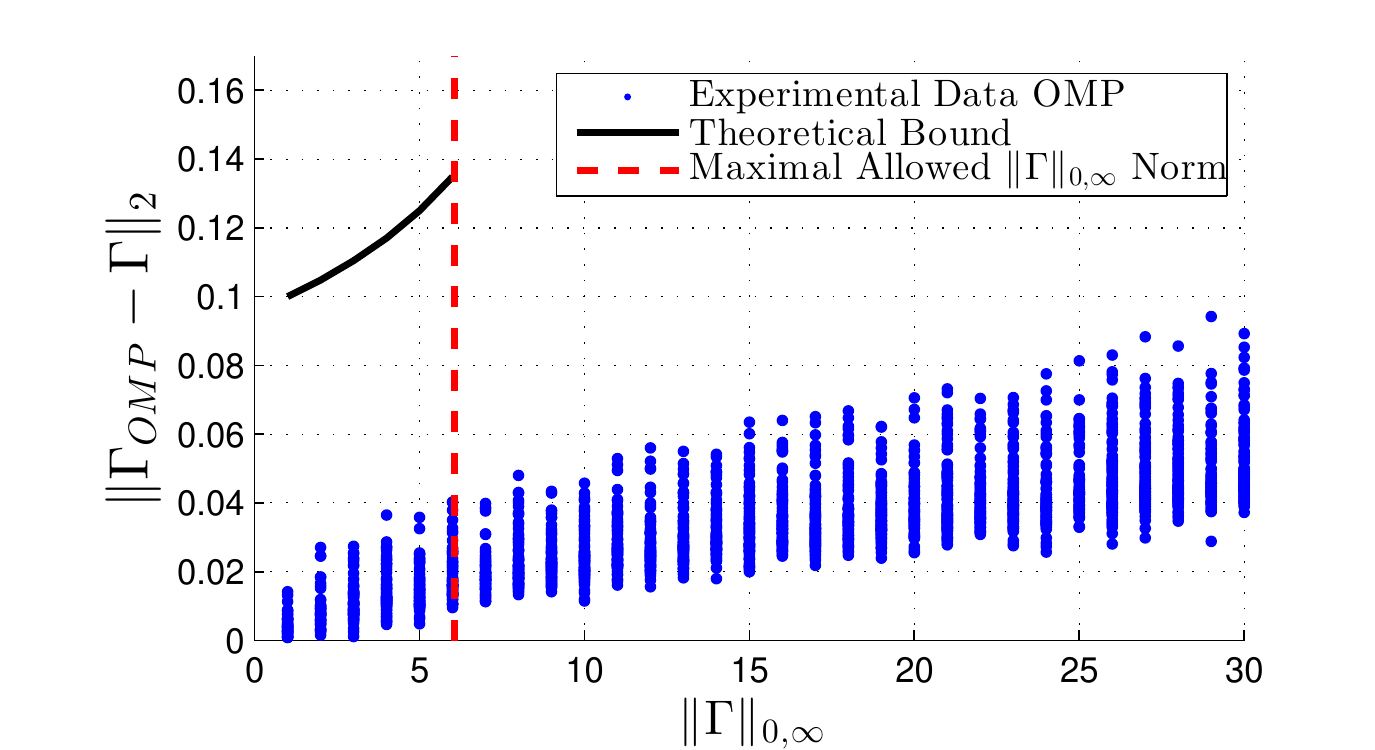}
\caption{The distance $\|\Gama_{\text{OMP}} - \Gama\|_2$ as a function of the $\Loi$ norm, and the corresponding theoretical bound.}
\label{fig:L2Dist_OMP_Noisy}
\vspace{-0.2cm}
\end{figure}

In what follows, we will first center our attention on the bounds obtained for the OMP algorithm, and then proceed to the ones corresponding to the BP. Given the noisy signals, we run OMP with a sparsity constraint, obtaining $\Gama_{\text{OMP}}$. For each realization of the global signal, we compute the minimal entry (in absolute value) of the global sparse vector, $|\Gamma_{min}|$, and its $\Loi$ norm. In addition, we compute the maximal local energy of the noise, $\epsilon_L$, corresponding to the highest energy of a $n$-dimensional patch of $\E$.

Recall that the theorem in the previous subsection poses two claims: 1) the stability of the result in terms of $\|\Gama_{\text{OMP}} - \Gama\|_2$; and 2) the success in recovering the correct support. In Figure \ref{fig:L2Dist_OMP_Noisy} we investigate the first of these points, presenting the distance between the estimated and the true sparse codes as a function of the $\Loi$ norm of the original vector. As it is clear from the graph, the empirical distances are below the theoretical bound depicted in black, given by $\frac{\epsilon^2}{1-\mu(\D)(\|\Gama\|_{0,\infty}-1)}$. According to the theorem's assumption, the sparse vector should satisfy $\|\Gama\|_{0,\infty} < \frac{1}{2}\left(1 + \frac{1}{\mu(\D)} \right) - \frac{1}{\mu(\D)}\cdot\frac{\epsilon_{_L}}{|\Gamma_{\min}|}$. The red dashed line delimits the area where this is met, with the exception that we omit the second term in the previous expression, as done previously in \cite{Donoho2006}. This disregards the condition on the $|\Gamma_{\min}|$ and $\epsilon_{_L}$ (which depends on the realization). Yet, the empirical results remain stable. 

In order to address the successful recovery of the support, we compute the ratio $\frac{\epsilon_{_L}}{|\Gamma_{\min}|}$ for each realization in the experiment. In Figure \ref{fig:PhaseTransition_OMP_Noisy}, for each sample we denote by $\bullet$ or $\times$ the success or failure in recovering the support, respectively. Each point is plotted as a function of its $\Loi$ norm and its corresponding ratio. The theoretical condition for the success of the OMP can be rewritten as $\frac{\epsilon_{_L}}{|\Gamma_{min}|} < \frac{\mu(\D)}{2}\left( 1+\frac{1}{\mu(\D)} \right) - \mu(\D)\|\Gama\|_{0,\infty}$, presenting a bound on the ratio $\frac{\epsilon_{_L}}{|\Gamma_{\min}|}$ as a function of the $\Loi$ norm. This bound is depicted with a blue line, indicating that the empirical results agree with the theoretical claims.

\begin{figure}[t]
	\centering
	\begin{subfigure}[t]{.5\textwidth}
		\centering
		\includegraphics[trim = 15 0 15 0,width = 1\textwidth]{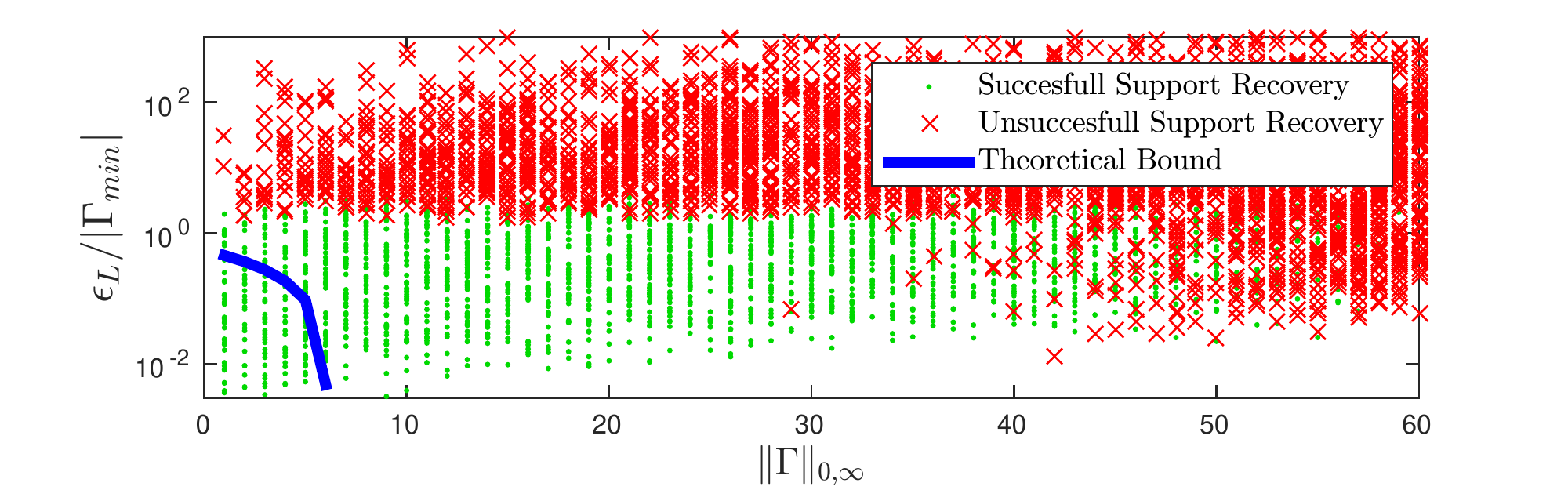}
		\caption{Orthogonal Matching Pursuit.}
		\label{fig:PhaseTransition_OMP_Noisy}
	\end{subfigure} 
	\\[.15cm]
	\begin{subfigure}[t]{.5\textwidth}
		\centering
		\includegraphics[trim = 15 0 15 0,width = 1\textwidth]{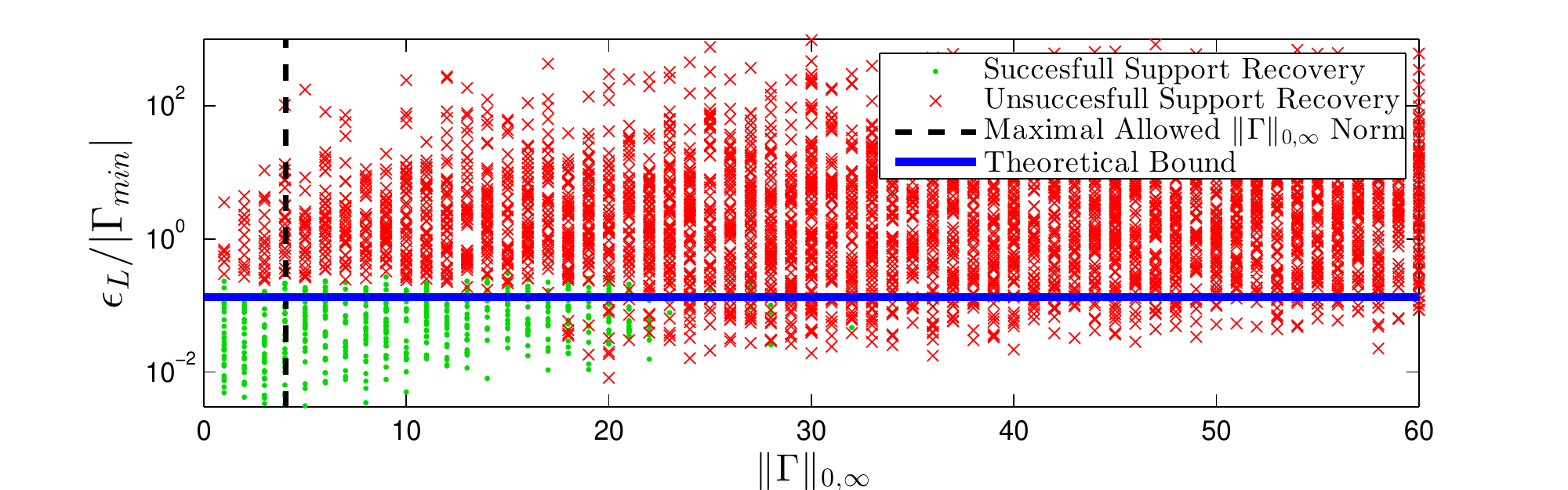}
		\caption{Basis Pursuit.}
		\label{fig:PhaseTransition_BP_Noisy}
	\end{subfigure}
	\caption{The ratio $\epsilon_{_L}/|\Gamma_{\min}|$ as a function of the $\Loi$ norm, and the theoretical bound for the successful recovery of the support, for both the OMP and BP algorithms.}
	\vspace{-0.3cm}
\end{figure}

One can also observe two distinct phase transitions in Figure \ref{fig:PhaseTransition_OMP_Noisy}. On the one hand, noting that the $y$ axis can be interpreted as the inverse of the noise-to-signal ratio (in some sense), we see that once the noise level is too high, OMP fails in recovering the support\footnote{Note that the abrupt change in this phase-transition area is due to the log scale of the $y$ axis.}. On the other hand, similar to what was presented in the noiseless case, once the $\Loi$ norm becomes too large, the algorithm is prone to fail in recovering the support.

We now shift to the empirical verification of the guarantees obtained for the BP. We employ the same dictionary as in the experiment above, and the signals are constructed in the same manner. We use the implementation of the LARS algorithm within the SPAMS package\footnote{Freely available from http://spams-devel.gforge.inria.fr/.} in its Lagrangian formulation with the theoretically justified parameter $\lambda=4\epsilon_L$, obtaining $\Gama_{\text{BP}}$. Once again, we compute the quantities: $|\Gamma_{min}|$, $\|\Gama\|_{0,\infty}$ and $\epsilon_L$.

Theorem \ref{Theorem:StabilityBP} states that the $\ell_\infty$ distance between the BP solution and the true sparse vector is below $\frac{15}{2}\epsilon_L$. In Figure \ref{fig:LInfDist_BP_Noisy} we depict the ratio $\frac{\|\Gama_{\text{BP}}-\Gama\|_\infty}{\epsilon_L}$ for each realization, verifying it is indeed below $\frac{15}{2}$ as long as the $\Loi$ norm is below $\frac{1}{3}\left(1+\frac{1}{\mu(\D)}\right) \approx 4$. Next, we would like to corroborate the assertions regarding the recovery of the true support. To this end, note that the theorem guarantees that all entries satisfying $|\Gamma_i|>\frac{15}{2}\epsilon_L$ shall be recovered by the BP algorithm. Alternatively, one can state that the complete support must be recovered as long as $\frac{\epsilon_L}{|\Gamma_{\min}|}<\frac{2}{15}$. To verify this claim, we plot this ratio for each realization as function of the $\Loi$ norm in Figure \ref{fig:PhaseTransition_BP_Noisy}, marking every point according to the success or failure of BP (in recovering the complete support).
As evidenced in \cite{Elad_Book}, OMP seems to be far more accurate than the BP in recovering the true support. As one can see by comparing Figure \ref{fig:PhaseTransition_OMP_Noisy} and \ref{fig:PhaseTransition_BP_Noisy}, BP fails once the $\Loi$ norm goes beyond $20$, while OMP succeeds all the way until $\|\Gama\|_{0,\infty}=40$. 

\section{From Global Pursuit to Local Processing}
\label{Sec:FromGlobal2LocalProcessing}

We now turn to analyze the practical aspects of solving the $\Poie$ problem given the relationship $\Y = \D\Gama + \E$. Motivated by the theoretical guarantees of success derived in the previous section, the first na\"ive approach would be to employ global pursuit methods such as OMP and BP. However, these are computationally demanding as the dimensions of the convolutional dictionary are prohibitive for high values of $N$, the signal length.  

As an alternative, one could attempt to solve the $\Poie$ problem using a patch-based processing scheme. In this case, for example, one could suggest to solve a local and relatively cheaper pursuit for every patch in the signal (including overlaps) using the local dictionary $\D_L$. It is clear, however, that this approach will not work well under the convolutional model, because atoms used in overlapping patches are simply not present in $\D_L$. On the other hand, one could turn to employ $\O$ as the \emph{local} dictionary, but this is prone to fail in recovering the correct support of the atoms. To see this more clearly, note that there is no way to distinguish between any of the atoms having only one entry different than zero; i.e., those appearing on the extremes of $\O$ in Figure \ref{PartialStripe}.

As we can see, neither the na\"ive global approach, nor the simple patch-based processing, provide an effective strategy. Several questions arise from this discussion: Can we solve the global pursuit problem using local patch-based processing? Can the proposed algorithm rely merely on the low dimensional dictionaries $\D_L$ or $\O$ while still fully solving the global problem? If so, in what form should the local patches communicate in order to achieve a global consensus? In what follows, we address these issues and provide practical and globally optimal answers.

\begin{figure}[t]
	\centering
	\includegraphics[trim = 50 0 30 20, width = 0.35\textwidth]{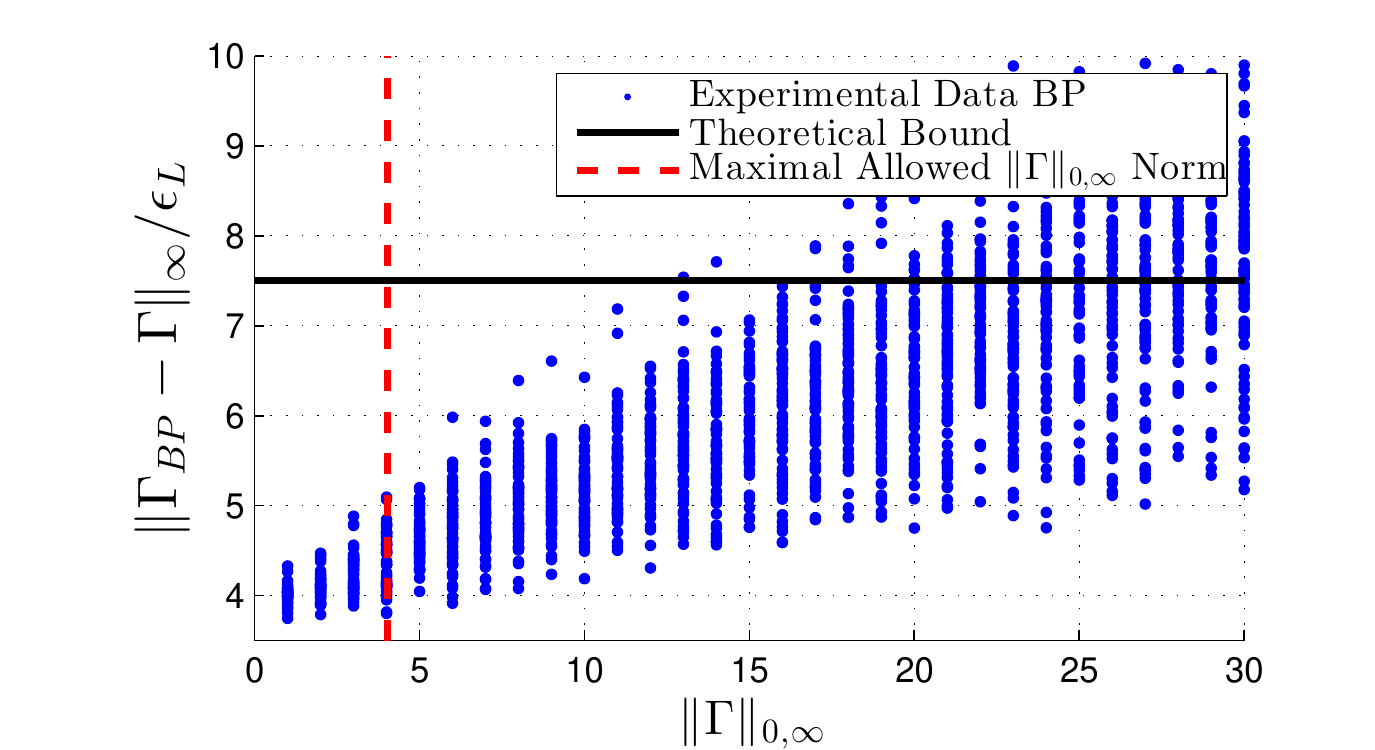}
	\caption{The distance $\|\Gama_{\text{BP}} - \Gama\|_\infty/ \epsilon_L$ as a function of the $\Loi$ norm, and the corresponding theoretical bound.}
	\label{fig:LInfDist_BP_Noisy}
	\vspace{-0.3cm}
\end{figure}

\vspace{-0.1cm}
\subsection{Global to Local Through Bi-Level Consensus}

When dealing with global problems which can be solved locally, a popular tool of choice is the Alternating Direction Method of Multipliers (ADMM) \cite{Boyd2011} in its consensus formulation. In this framework, a global objective can be decomposed into a set of local and distributed problems which attempt to reach a global agreement. We will show that this scheme can be effectively applied in the convolutional sparse coding context, providing an algorithm with a bi-level consensus interpretation.

The ADMM has been extensively used throughout the literature in convolutional sparse coding. However, as explained in the introduction, it has been applied usually in the Fourier domain. As a result, the sense of locality is lost in these approaches and the connection to traditional (local) sparse coding is non-existent. On the contrary, the pursuit method we propose here is carried out in a localized fashion in the original domain, while still benefiting from the advantages of ADMM.

Recall the $\ell_1$ relaxation of the global pursuit, given by
\begin{equation} \label{eq:BP_global}
\min_{\Gama} \frac{1}{2}\|\Y - \D \Gama \|^2_2 +\lambda \| \Gama \|_1.
\end{equation}
Note that the noiseless model is contained in this formulation as a particular case when $\lambda$ tends to zero.
Using the separability of the $\ell_1$ norm, $\| \Gama \|_1 = \sum_i \|\alfa_i\|_1$, where $\alfa_i$ are $m-$dimensional local sparse vectors, as previously defined in part I. In addition, using the fact that $\R_i\D\Gama = \O \gama_i$, we apply a local decomposition on the first term as well. This results in 
\begin{equation}
\min_{\{\alfa_i\},\{\gama_i\}} \quad \frac{1}{2n} \sum_i \| \R_i \Y - \O \gama_i \|^2_2 +\lambda \sum_i \|\alfa_i\|_1,
\end{equation}
where we have divided the sum in the first term by the number of contributions per entry in the global signal, which is equal to the patch size $n$.
Note that the above minimization is not equivalent to the original problem in Equation \eqref{eq:BP_global} since no consensus is enforced between the local variables. Recall that the different $\gama_i$ overlap, and as such the above minimization must enforce them to agree. In addition, $\alfa_i$ should be constrained to be equal to the center of the corresponding $\gama_i$. Based on these observations, we modify the above problem by adding the appropriate constraints, obtaining
\begin{align}
\min_{\{\alfa_i\},\{\gama_i\},\Gama} \quad \frac{1}{2n} \sum_i \| \R_i \Y  &- \O \gama_i \|^2_2 +\lambda \sum_i \|\alfa_i\|_1 \\
& \text{ s.t. } \begin{cases} \mathbf{Q} \gama_i = \alfa_i \\ \S_i \Gama = \gama_i \end{cases} \forall i,
\end{align}
where $\mathbf{Q}$ extracts the center $m$ coefficients corresponding to $\alfa_i$ from $\gama_i$, and $\S_i$ extracts the $i^{th}$ stripe $\gama_i$ from $\Gama$.

Defining $f_i(\gama_i) = \frac{1}{2n} \| \R_i \Y-  \O \gama_i \|^2_2 $ and $g(\alfa_i) = \lambda \| \alfa_i \|_1 $, we can now express our problem as follows
\begin{equation}
\min_{\{\alfa_i\},\{\gama_i\},\Gama}  \sum_i f_i(\gama_i) +  g(\alfa_i)\ \text{ s.t. }  \begin{cases} \mathbf{Q}\gama_i = \alfa_i \\ \S_i \Gama = \gama_i \end{cases} \forall i.
\end{equation}
This is a two-level local-global consensus formulation:  each $m$ dimensional vector $\alfa_i$ is enforced to agree with the center of its corresponding $(2n-1)m$ dimensional $\gama_i$, and in addition, all $\gama_i$ are required to agree with each other as to create a global $\Gama$. 
The above can be shown to be equivalent to the standard two-block ADMM formulation \cite{Boyd2011}. 

Writing the augmented Lagrangian (in its scaled form), we obtain the following objective for the problem above
\begin{align*}
\min_{\Gama,\{\alfa_i\},\{\gama_i\},\{\u_i\},\{\uh_i\}} \sum_i f_i(\gama_i) + g(\alfa_i) & + \frac{\rho}{2} \|  \mathbf{Q}\gama_i - \alfa_i + \u_i  \|^2_2 \\
& + \frac{\rho}{2} \| \S_i \Gama - \gama_i  + \uh_i  \|^2_2,
\end{align*}
which can be minimized with the method depicted in Algorithm \ref{alg:ADMM_algo}. We have introduced the (scaled) Lagrange multipliers $\u_i$ and $\uh_i$ corresponding to the variables $\alfa_i$ and $\gama_i$, respectively, and have denoted by $\rho$ the step size in the algorithm. Each iteration of this method can be divided into four steps:
\begin{enumerate}
\item Local sparse coding that updates $\alfa_i$ (for all $i$), which amounts to a simple soft thresholding operation.
\item Solution of a linear system of equations for updating $\gama_i$ (for all $i$), which boils down to a simple multiplication by a constant matrix.
\item Update of the global sparse vector $\Gama$, which aggregates the $\gama_i$ by averaging.
\item Update of the dual variables.
\end{enumerate}
As can be seen, the ADMM provides a simple way of breaking the global pursuit into local operations. Moreover, the local coding step is just a projection problem onto the $\ell_1$ ball, which can be solved through simple soft thresholding, implying that there is no complex pursuit involved.

\begin{algorithm}[t]
 \While{not converged}{
 
 \vspace{0.3cm}
 Local Thresholding: $\alfa_i \leftarrow \underset{\alfa}{\min}\ \lambda \| \alfa \|_1 + \frac{\rho}{2} \|  \mathbf{Q}\gama_i - \alfa + \u_i  \|^2_2$ \;
  
  \vspace{0.3cm}
  Stripe Projection:
  \begin{align}
  \hspace{-0.7321cm} \gama_i \leftarrow \M^{-1} \left( \frac{1}{n}\O^T\R_i\Y \right. & + {\rho}( \S_i\Gama + \uh_i) \\
  & + \rho \mathbf{Q}^T ( \alfa_i-\u_i ) \Big),
  \end{align}
  where $\M = \rho \mathbf{Q}^T\mathbf{Q} + \frac{1}{n}\O^T\O + {\rho} \mathbf{I}$\;
  
  \vspace{0.3cm}
  Global Update:\newline
  $\Gama \leftarrow \left( \sum_i \S_i^T \S_i \right)^{-1} \sum_i \S_i^T (\gama_i - \uh_i) $ \;
 
  \vspace{0.3cm}
  Dual Variables Update:\newline
  $\u_i \leftarrow \u_i + (\mathbf{Q} \gama_i - \alfa_i) $ \; 
  $\uh_i \leftarrow \uh_i + (\S_i \Gama - \gama_i) $ \;
 }
 \caption{Global pursuit using local processing via ADMM.}
\label{alg:ADMM_algo}
\end{algorithm}

Since we are in the $\ell_1$ case, the function $g$ is convex, and so are the functions $f_i$. Therefore, the above is guaranteed to converge to the minimizer of the global BP problem. As a result, we benefit from the theoretical guarantees derived in previous sections. One could attempt, in addition, to enforce an $\ell_0$ penalty instead of the $\ell_1$ norm on the global sparse vector. Despite the fact that no convergence guarantees could be claimed under such formulation, the derivation of the algorithm remains practically the same, with the only exception that the soft thresholding is replaced by a hard one.

\vspace{-0.3cm}
\subsection{An Iterative Soft Thresholding Approach}
While the above algorithm suggests a way to tackle the global problem in a local fashion, the matrix involved in the stripe projection stage is relatively large when compared to the dimensions of $\D_L$. As a consequence, the bi-level consensus introduces an extra layer of complexity to the algorithm. In what follows, we propose an alternative method based on the Iterative Soft Thresholding (IST) algorithm which relies solely on multiplications by $\D_L$, which features a simple intuitive interpretation and implementation. A similar approach for solving the convolutional sparse coding problem was suggested in \cite{Chalasani2013}. Our main concern here is to provide insight into local alternatives for the global sparse coding problem and their guarantees, whereas the work in \cite{Chalasani2013} focused on the optimizations aspects of this pursuit from an entirely global perspective.

\begin{algorithm}[t]
	$\forall i \quad \r_i^0 = \R_i \Y, \quad \alfa_i^0 = \mathbf{0}$\;
	k = 1\;
	\While{not converged}{
		\vspace{0.3cm}
		Local Coding:\newline $\forall i \quad \alfa_i^k = \mathcal{S}_{\lambda/c}\left( \alfa_i^{k-1} + \frac{1}{c} \ \D_L^T \ \r_i^{k-1} \right)$ \;
		
		\vspace{0.3cm}
		Computation of the Patch Averaging Aggregation:\newline
		$ \widehat{\X}^k = \sum_i \R_i^T\D_L\alfa_i^k $ \; 
				
		\vspace{0.3cm}
		Update of the Residuals:\newline
		$\forall i \quad \r_i^k = \R_i \left( \Y - \widehat{\X}^k \right)$ \; 
		
		\vspace{0.3cm}
		$k=k+1$\;
	}
	\caption{Global pursuit using local processing via iterative soft thresholding.}
	\label{alg:IST_algo}
\end{algorithm}

Let us consider the IST algorithm \cite{Daubechies2004} which minimizes the global objective
\begin{equation}
\min_{\Gama} \frac{1}{2}\|\Y - \D \Gama \|^2_2 +\lambda \| \Gama \|_1
\end{equation}
by iterating the following updates
\begin{equation}
\Gama^k = \mathcal{S}_{\lambda/c}\left( \Gama^{k-1} + \frac{1}{c} \D^T(\Y-\D\Gama^{k-1}) \right).
\end{equation}
Given a vector, the operator $\mathcal{S}$ applies a soft thresholding with threshold $\lambda/c$ on its entries. Interpreting the above as a projected gradient descent, the coefficient $c$ relates to the gradient step size and should be set according to the maximal singular value of the matrix $\D$ in order to guarantee convergence \cite{Daubechies2004}.

The above algorithm might at first seem undesirable due to the multiplications of the residual $\Y-\D\Gama^{k-1}$ with the global dictionary $\D$. Yet, we will show that such a multiplication does not need to be carried out explicitly due to the convolutional structure imposed on our dictionary.

\begin{figure}[!htb]
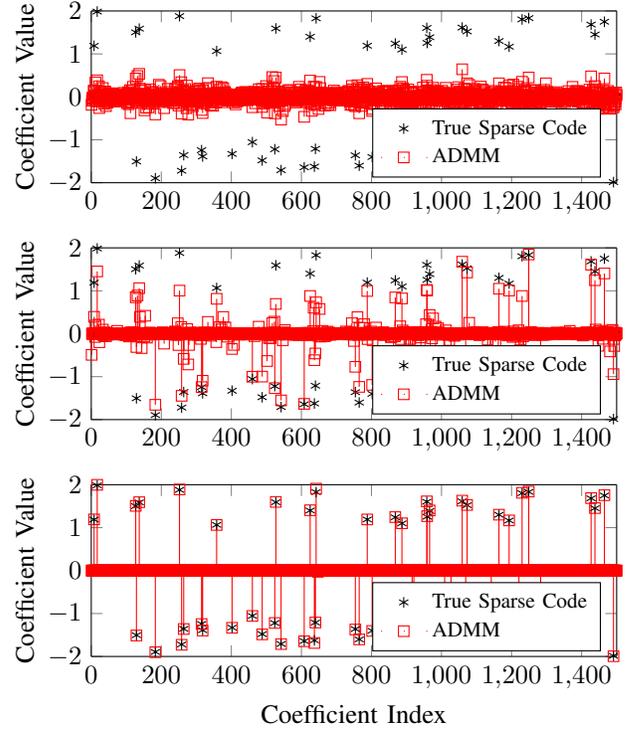

	\centering
	\begin{subfigure}{.45\textwidth}
		\setlength\figureheight{.28\textwidth}
		\setlength\figurewidth{.9\textwidth}
		\input{ADMM_L1_noiseless_codes_1.tikz}
	\end{subfigure}
	
	\centering
	\vspace{0.1cm}
	\begin{subfigure}{.45\textwidth}
		\setlength\figureheight{.28\textwidth}
		\setlength\figurewidth{.9\textwidth}
		\input{ADMM_L1_noiseless_codes_2.tikz}
	\end{subfigure}
	
	\centering
	\vspace{0.1cm}
	\begin{subfigure}{.45\textwidth}
		\setlength\figureheight{.28\textwidth}
		\setlength\figurewidth{.9\textwidth}
		\input{ADMM_L1_noiseless_codes_3.tikz}
	\end{subfigure}
	\caption{The sparse vector $\Gama$ after the global update stage in the ADMM algorithm at iterations $20$ (top), $200$ (middle) and $1000$ (bottom). An $\ell_1$ norm formulation was used for this experiment, in a noiseless setting.}
	\label{fig:ADMM_L1_noiseless_codes}
	\vspace{-0.4cm}
\end{figure}

Defining as $\P_i$ the operator which extracts the $i^{th}$ $m-$dimensional vector from $\Gama$, we can break the above algorithm into local updates by
\begin{equation}
\P_i\Gama^k = \mathcal{S}_{\lambda/c}\left( \P_i\Gama^{k-1} + \frac{1}{c} \P_i\D^T(\Y-\D\Gama^{k-1}) \right).
\end{equation}
As a first observation, the matrix $\P_i\D^T$, which is of size $m \times N$, is in-fact $\D_L^T$ padded with zeros. As a consequence, the above can be rewritten as follows:
\begin{equation}
\P_i\Gama^k = \mathcal{S}_{\lambda/c}\left( \P_i\Gama^{k-1} + \frac{1}{c} \P_i\D^T\R_i^T \R_i (\Y-\D\Gama^{k-1}) \right),
\end{equation}
where we have used $\R_i$ as the operator which extracts the $i^{th}$ $n-$dimensional patch from an $N-$dimensional global signal. The operator $\P_i$ extracts $m$ rows from $\D^T$, while $\R_i^T$ extracts its non-zero columns. Therefore, $\P_i\D^T\R_i^T = \D_L^T$, and so we can write
\begin{equation}
\P_i\Gama^k = \mathcal{S}_{\lambda/c}\left( \P_i\Gama^{k-1} + \frac{1}{c} \D_L^T \R_i (\Y-\D\Gama^{k-1}) \right).
\end{equation}
Noting that $\alfa_i^k=\P_i\Gama^k$ is the $i^{th}$ local sparse code, and defining $\r_i^k=\R_i (\Y-\D\Gama^{k-1})$ as the corresponding patch-residual at iteration $k$, we obtain our final update (for every patch)
\begin{equation}
\alfa_i^k = \mathcal{S}_{\lambda/c}\left( \alfa_i^{k-1} + \frac{1}{c} \ \D_L^T \ \r_i^{k-1} \right).
\end{equation}
We summarize the above derivations in Algorithm \ref{alg:IST_algo}.

As we see, all operations can be expressed in terms of low dimensional $\alfa_i$ and the small dictionary $\D_L$. Moreover, we can interpret each iteration of this algorithm as a scatter and gather process. Given a global signal, local residuals are first extracted and scattered to different nodes, where they undergo local shrinkage operations. Then, their results are gathered for the re-computation of the global residual.

Assuming the step size is chosen appropriately, as explained previously, the above algorithm is guaranteed to converge to the solution of the global BP. As such, our theoretical analysis holds in this case as well. Alternatively, one could attempt to employ an $\ell_0$ approach, using a global iterative hard thresholding algorithm. In this case, however, there is no theoretical guarantees in terms of the $\Loi$ norm. Still, we believe that a similar analysis to the one taken throughout this work could lead to such claims.

\vspace{-0.3cm}
\subsection{Experiments}
Next, we proceed to provide empirical results for the above described methods. To this end, we take an undercomplete DCT dictionary of size $25\times 5$, and use it as $\D_L$ in order to construct the global convolutional dictionary $\D$ for a signal of length $N = 300$. We then generate a random global sparse vector $\Gama$ with $50$ non-zeros, with entries distributed uniformally in the range $[-2,-1]\ \cup\ [1,2]$, creating the signal $\X = \D\Gama$.

\begin{figure}[t]
	\centering
	\includegraphics[trim = 30 10 10 10, width = 0.48\textwidth]{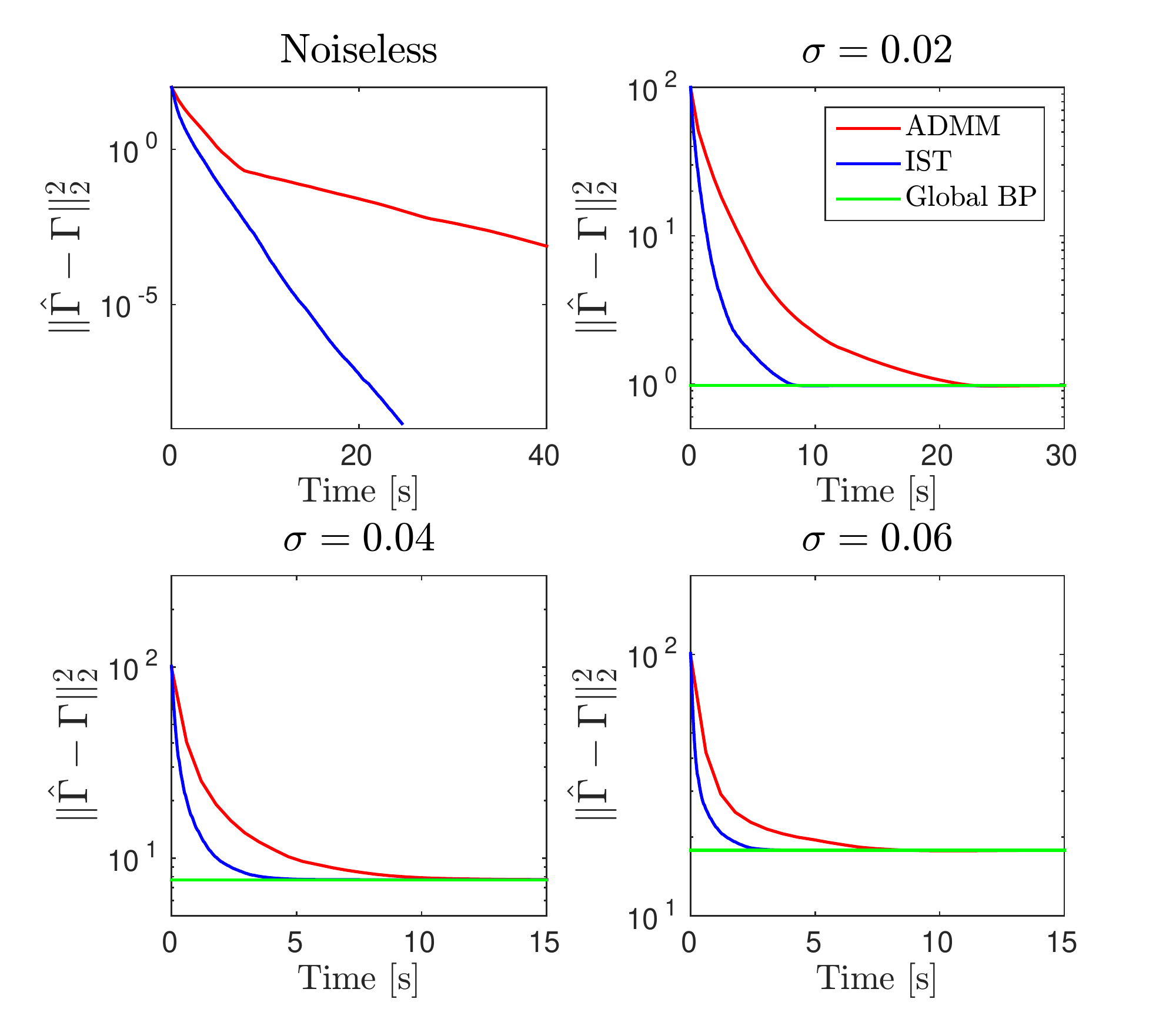}
	\caption{Distance between the estimate $\hat{\Gama}$ and the underlying solution $\Gama$ as a function of time for the IST and the ADMM algorithms compared to the solution obtained by solving the global BP.}
	\label{fig:ADMM_IT}
	\vspace{-0.3cm}
\end{figure}

We first employ the ADMM and IST algorithms in a noiseless scenario in order to minimize the global BP and find the underlying sparse vector. Since there is no noise added in this case, we decrease the penalty parameter $\lambda$ progressively throughout the iterations, making this value tend to zero as suggested in the previous subsection. In Figure \ref{fig:ADMM_L1_noiseless_codes} we present the evolution of the estimated $\hat{\Gama}$ for the ADMM solver throughout the iterations, after the global update stage. Note how the algorithm progressively increases the consensus and eventually recovers the true sparse vector. Equivalent plots are obtained for the IST method, and these are therefore omitted.

To extend the experiment to the noisy case, we contaminate the previous signal with additive white Gaussian noise of different standard deviations: $\sigma=0.02,0.04,0.06$. We then employ both local algorithms to solve the corresponding BPDN problems, and analyze the $\ell_2$ distance between their estimated sparse vector and the true one, as a function of time. These results are depicted in Figure \ref{fig:ADMM_IT}, where we include for completion the distance of the solution achieved by the global BP in the noisy cases. A few observations can be drawn from these results. Note that both algorithms converge to the solution of the global BP in all cases. In particular, the IST converges significantly faster than the ADMM method. Interestingly, despite the later requiring a smaller number of iterations to converge, these are relatively more expensive than those of the IST, which employs only multiplications by the small $\D_L$.

%

\vspace{-0.25cm}
\section{Conclusion and Future Work}
\label{Sec:Conclusions}

Striding on the foundations paved in the first part of this work, we have presented here a series of stability results for the convolutional sparse model in the presence of noise, providing guarantees for corresponding pursuit algorithms. These were possible due to our migration from the $\ell_0$ to the $\Loi$ norm, together with the generalization and utilization of concepts such as RIP and ERC. Seeking for a connection between traditional patch-based processing and the convolutional sparse model, we have proposed two efficient methods, that solve the global pursuit while working locally. 

We envision many possible directions of future work, and in what follows we present some of them:
\begin{itemize}
	\item We could extend our study, which considers only worst-case scenarios, to an average-performance analysis. By assuming more information about the model, it might be possible to quantify the probability of success of pursuit methods in the convolutional case. Such results would close the gap between current bounds and empirical results, as presented in both parts of this work.
	
	\item From an application point of view, we envision that interesting algorithms could be proposed to tackle real problems in signal and image processing while using the convolutional model. We note that while convolutional sparse coding has been applied to various problems, simple inverse problems such as denoising have not yet been properly addressed. We believe that the analysis presented in this work could facilitate the development of such algorithms by showing how to leverage on the subtleties of this model.	

	\item Interestingly, even though we have declared the $\Poi$ problem as our goal, at no point have we actually attempted to tackle it directly. What we have shown instead is that popular algorithms succeed in finding its solution. One could perhaps propose an algorithm specifically tailored for solving this problem -- or its convex relaxation ($\ell_{1,\infty}$). Such a method might be beneficial from both a theoretical and a practical aspect.
\end{itemize}
All these points, and more, are matter of current research.

\vspace{-0.25cm}
\section{Acknowledgements}
The research leading to these results has received funding from the European Research Council under European Union’s Seventh Framework Programme, ERC Grant agreement no. 320649. The authors would like to thank Dmitry Batenkov, Yaniv Romano and Raja Giryes for the prolific conversations and most useful advice which helped shape this work.

\appendices
\renewcommand{\theequation}{A-\arabic{equation}}
\setcounter{equation}{0}  
\section{OMP Stability Guarantee (Proof of Theorem \ref{Theorem:StabilityOMP})} \label{App:StabilityOMP}
\begin{proof}
	We shall first prove that the first step of OMP succeeds in recovering an element from the correct support.
	Denoting by $\mathcal{T}$ the support of $\Gama$, we can write
	\begin{equation}
	\Y = \D\Gama + \E = \sum_{t\in \mathcal{T}} \Gamma_t \d_t + \E.
	\label{GlobalExpression}
	\end{equation}
	Suppose that $\Gama$ has its largest coefficient in absolute value in $\Gamma_i$. For the first step of OMP to choose one of the atoms in the support, we require
	\begin{equation}
	|\d_i^T \Y | > \max_{j\notin\mathcal{T}} | \d_j^T \Y |.
	\end{equation}
	Substituting Equation \eqref{GlobalExpression} in this requirement we obtain
	\begin{equation} \label{eq:inequality}
	\left| \sum_{t\in \mathcal{T}}\Gamma_t\d_t^T\d_i + \E^T\d_i\right| > \max_{j\notin\mathcal{T}} \left| \sum_{t\in \mathcal{T}}\Gamma_t\d_t^T\d_j + \E^T\d_j\right|.
	\end{equation}
	Using the reverse triangle inequality we can construct a lower bound for the left hand side:
	\begin{align}
	\left| \sum_{t\in \mathcal{T}}\Gamma_t\d_t^T\d_i + \E^T\d_i\right|
	&\geq\left| \sum_{t\in \mathcal{T}}\Gamma_t\d_t^T\d_i\right| - \left|\E^T\d_i\right|.
	\end{align}
	Our next step is to bound the absolute value of the inner product of the noise and the atom $\d_i$. A na\"ive approach, based on the Cauchy-Schwarz inequality and the normalization of the atoms, would be to bound the inner product as $|\E^T\d_i| \leq \|\E\|_2\cdot\|\d_i\|_2 \leq \epsilon$. However, such bound would disregard the local nature of the atoms. Due to their limited support we have that $\d_i=\R_i^T\R_i\d_i$ where, as previously defined, $\R_i$ extracts a $n$-dimensional patch from a $N$-dimensional signal. Based on this observation, we have that
	\begin{equation}
	|\E^T\d_i| = |\E^T\R_i^T\R_i\d_i| \leq \|\R_i\E\|_2\cdot\|\d_i\|_2  \leq \epsilon_{_L},
	\end{equation}		
	where we have used the fact that \mbox{$\|\R_i \E\|_2 \leq \epsilon_{_L} \ \forall\ i$}. By exploiting the locality of the atom, together with the assumption regarding the maximal local energy of the noise, we are able to obtain a much tighter bound, because $\epsilon_{_L} \ll \epsilon$ in general. As a result, we obtain
	\begin{equation}
	\left| \sum_{t\in \mathcal{T}}\Gamma_t\d_t^T\d_i + \E^T\d_i\right| \geq\left| \sum_{t\in \mathcal{T}}\Gamma_t\d_t^T\d_i\right| - \epsilon_{_L}.
	\end{equation}
	Using the reverse triangle inequality, the normalization of the atoms and the fact that $|\Gamma_i|\geq|\Gamma_t|$, we obtain
	\begin{align}
	\left| \sum_{t\in \mathcal{T}}\Gamma_t\d_t^T\d_i + \E^T\d_i\right|
	&\geq |\Gamma_i| - \sum_{t \in \mathcal{T},t\neq i}|\Gamma_t|\cdot|\d_t^T\d_i |-\epsilon_{_L} \\
	&\geq |\Gamma_i| - |\Gamma_i|\sum_{t \in \mathcal{T},t\neq i}|\d_t^T\d_i |-\epsilon_{_L}.
	\end{align}
	Notice that $\d_t^T\d_i$ is zero for every atom too far from $\d_i$ because the atoms do not overlap. Denoting the stripe which fully contains the $i^{th}$ atom as $p(i)$ and its support as $\mathcal{T}_{p(i)}$, we can restrict the summation as:
	\begin{equation}
	\left| \sum_{t\in \mathcal{T}}\Gamma_t\d_t^T\d_i + \E^T\d_i\right| \geq |\Gamma_i| - |\Gamma_i|\sum_{\substack{t \in \mathcal{T}_{p(i)},\\ t\neq i}}|\d_t^T\d_i |-\epsilon_{_L}.
	\end{equation}
	Denoting by $n_{p(i)}$ the number of non-zeros in the support $\mathcal{T}_{p(i)}$ and using the definition of the mutual coherence we obtain:
	\begin{align*}
	\left| \sum_{t\in \mathcal{T}}\Gamma_t\d_t^T\d_i + \E^T\d_i\right| &	\geq |\Gamma_i| - |\Gamma_i| (n_{p(i)}-1)\mu(\D) - \epsilon_{_L}\\
	& \geq |\Gamma_i| - |\Gamma_i| (\|\Gama\|_{0,\infty}-1)\mu(\D) - \epsilon_{_L}.
	\end{align*}
	In the last inequality we have used the definition of the $\Loi$ norm.
	
	Now, we construct an upper bound for the right hand side of equation \eqref{eq:inequality}, once again using the triangle inequality and the fact that $|\E^T\d_j|\leq\epsilon_{_L}$:
	\begin{align}
	\max_{j\notin\mathcal{T}}\left| \sum_{t\in \mathcal{T}}\Gamma_t\d_t^T\d_j + \E^T\d_j \right|
	& \leq \max_{j\notin\mathcal{T}}\left| \sum_{t\in \mathcal{T}}\Gamma_t\d_t^T\d_j\right| + \epsilon_{_L}.
	\end{align}
	Using the same rationale as before we get
	\begin{align}
	\max_{j\notin\mathcal{T}}\left| \sum_{t\in \mathcal{T}}\Gamma_t\d_t^T\d_j + \E^T\d_j \right| 
	&\leq |\Gamma_i|\max_{j\notin\mathcal{T}} \sum_{t \in \mathcal{T}}|\d_t^T\d_j |+\epsilon_{_L}\\
	&\leq |\Gamma_i|\max_{j\notin\mathcal{T}} \sum_{t \in \mathcal{T}_{p(j)}}|\d_t^T\d_j |+\epsilon_{_L}\\
	&\leq |\Gamma_i|\cdot\|\Gama\|_{0,\infty}\cdot\mu(\D)+\epsilon_{_L}.
	\end{align}
	Using both bounds, we obtain
	\begin{align*}
	& \left| \sum_{t\in \mathcal{T}}\Gamma_t\d_t^T\d_i + \E^T\d_i\right|
	\geq |\Gamma_i| - |\Gamma_i| (\|\Gama\|_{0,\infty}-1)\mu(\D) - \epsilon_{_L} \\
	\geq & |\Gamma_i|\cdot \|\Gama\|_{0,\infty}\mu(\D)+\epsilon_{_L}
	\geq\max_{j\notin\mathcal{T}}\left| \sum_{t\in \mathcal{T}}\Gamma_t\d_t^T\d_j + \E^T\d_j \right|.
	\end{align*}
	From this, it follows that
	\begin{align}
	\|\Gama\|_{0,\infty}&\leq\frac{1}{2}\left(1+\frac{1}{\mu(\D)}\right)-\frac{1}{\mu(\D)}\cdot\frac{\epsilon_{_L}}{|\Gamma_i|}.
	\label{Eq:ThisIneq}
	\end{align}
	Note that the theorem's hypothesis assumes that the above holds for $|\Gamma_{\text{min}}|$ instead of $|\Gamma_i|$. However, because $|\Gamma_i| \geq |\Gamma_{min}|$, this condition holds for every $i$. Therefore, Equation \eqref{Eq:ThisIneq} holds and we conclude that the first step of OMP succeeds.  
	
	Next, we address the success of subsequent iterations of the OMP.
	Define the sparse vector obtained after $k<\|\Gama\|_0$ iterations as $\Lamda^k$, and denote its support by $\mathcal{T}^k$. Assuming that the algorithm identified correct atoms (i.e., has so far succeeded), $\mathcal{T}^k = \text{supp}\{\Lamda_k\}\subset\text{supp}\{\Gama\}$. The next step in the algorithm is the update of the residual. This is done by decreasing a term proportional to the chosen atoms from the signal; i.e., 
	\begin{equation}
	\Y^k = \Y - \sum_{i\in \mathcal{T}^k} \d_i \Lamda_i^k.
	\end{equation}
	Moreover, $\Y^k$ can be seen as containing a clean signal $\X^k$ and the noise component $\E$, where 
	\begin{equation}
	\X^k = \X - \sum_{i\in \mathcal{T}^k} \d_i \Lamda_i^k = \D\Gama^k.
	\end{equation}
	The objective is then to recover the support of the sparse vector corresponding to $\X^k$, $\Gama^k$, defined as\footnote{Note that if $k=0$, $\X_0 = \X$, $\Y_0 = \Y$, and $\Gama_0 = \Gama$.}
	\begin{equation}  \label{eq:defGama_k}
	\Gamma_i^k =
	\left\{
	\begin{array}{ll}
	\Gamma_i - \Lamda_i^k & \mbox{if } \ i\in \mathcal{T}^k \\
	\Gamma_i  & \mbox{if } \ i\notin \mathcal{T}^k.
	\end{array}
	\right.
	\end{equation}	
	Note that $\text{supp}\{\Gama^k\} \subseteq \text{supp}\{\Gama\}$ and so
	\begin{equation} \label{eq:omp_first_item}
	\|\Gama^k\|_{0,\infty}\leq \|\Gama\|_{0,\infty}.
	\end{equation}
	In words, the $\Loi$ norm of the underlying solution of $\X^k$ does not increase as the iterations proceed. Note that this representation is also unique in light of the uniqueness theorem presented in part I.
	From the above definitions, we have that
	\begin{align}
	\Y^k - \X^k &= \Y - \sum_{i\in \mathcal{T}^k} \d_i \Lamda_i^k - \X + \sum_{i\in \mathcal{T}^k} \d_i \Lamda_i^k \\
	&= \Y - \X = \E.
	\end{align}
	Hence, the noise level is preserved, both locally and globally; both $\epsilon$ and $\epsilon_L$ remain the same. 
	
	Note that $\Gama^k$ differs from $\Gama$ in at most $k$ places, following Equation \eqref{eq:defGama_k} and that $|\mathcal{T}^k|=k$. As such, $\|\Gama^k\|_{\infty}$ is greater than the $(k+1)^{th}$ largest element in absolute value in $\Gama$. This implies that $\|\Gama^k\|_{\infty}\geq |\Gamma_{\min}|$.
	Finally, we obtain that
	\begin{align}
	\|\Gama^k\|_{0,\infty} \leq \|\Gama\|_{0,\infty}
	& < \frac{1}{2}\left( 1+\frac{1}{\mu(\D)} \right)-\frac{1}{\mu(\D)}\cdot\frac{\epsilon_{_L}}{|\Gamma_{min}|} \\
	& \leq \frac{1}{2}\left( 1+\frac{1}{\mu(\D)} \right)-\frac{1}{\mu(\D)}\cdot\frac{\epsilon_{_L}}{\|\Gama^k\|_\infty}.
	\end{align}
	The first inequality is due to \eqref{eq:omp_first_item}, the second is the assumption in \eqref{omp_hypothesis} and the third was just obtained above. Thus,
	\begin{equation}
	\|\Gama^k\|_{0,\infty} < \frac{1}{2}\left( 1+\frac{1}{\mu(\D)} \right)-\frac{1}{\mu(\D)}\cdot\frac{\epsilon_{_L}}{\|\Gama^k\|_\infty}.
	\end{equation}		
	Similar to the first iteration, the above inequality together with the fact that the noise level is preserved, guarantees the success of the next iteration of the OMP algorithm. From this follows that the algorithm is guaranteed to recover the true support after $\|\Gama\|_0$ iterations.
	
	Finally, we move to prove the second claim. In its last iteration OMP solves the following problem:
	\begin{equation}
	\Gama_{OMP}=\arg\min_\Delt\|\D_{\mathcal{T}}\Delt-\Y\|_2^2,
	\end{equation}
	where $\D_{\mathcal{T}}$ is the convolutional dictionary restricted to the support $\mathcal{T}$ of the true sparse code $\Gama$. Denoting $\Gama_\mathcal{T}$ the (dense) vector corresponding to those atoms, the solution to the above problem is simply given by
	\begin{align}
	\Gama_{OMP}
	&=\D_\mathcal{T}^{\dagger}\Y
	=\D_\mathcal{T}^{\dagger}\left( \D\Gama + \E \right) \\
	&=\D_\mathcal{T}^{\dagger}\left( \D_\mathcal{T}\Gama_\mathcal{T} + \E \right)
	=\Gama_\mathcal{T} + \D_\mathcal{T}^{\dagger}\E,
	\end{align}
	where we have denoted by $\D_\mathcal{T}^{\dagger}$ the Moore-Penrose pseudoinverse of the sub-dictionary $\D_\mathcal{T}$.
	Thus,
	\begin{align}
	\|\Gama_{OMP}-\Gama_\mathcal{T}\|_2^2 = \|\D_\mathcal{T}^{\dagger}\E\|_2^2
	&\leq \|\D_\mathcal{T}^{\dagger}\|^2_2\cdot\|\E\|^2_2 \\
	= \frac{1}{\lambda_{\min}\left(\D_\mathcal{T}^T\D_\mathcal{T} \right)}\|\E\|^2_2
	&\leq\frac{\epsilon^2}{1-\mu(\D)(\|\Gama\|_{0,\infty}-1)}.
	\end{align}
	In the last inequality we have used the bound on the eigenvalues of $\D_\mathcal{T}^T\D_\mathcal{T}$ derived in Lemma 1, in part I.
	
\end{proof}

\renewcommand{\theequation}{B-\arabic{equation}}
\setcounter{equation}{0}  
\vspace{-.75cm}
\section{ERC in the Convolutional Sparse Model \\ (Proof of Theorem \ref{Theorem:ERC_Loi})}
\label{App:ERC_Loi}
\begin{proof}
	For the ERC to be satisfied, we must require that, for every $i \notin \mathcal{T}$, 
	\begin{equation}
	\| \D^{\dagger}_\mathcal{T} \d_i \|_1  =  \left\| \left( \D^T_\mathcal{T} \D_\mathcal{T} \right)^{-1} \D^T_\mathcal{T} \d_i \right\|_1 < 1.
	\end{equation}
	Using properties of induced norms, we have that
	\begin{equation} \label{eq:ERC_multiplicative}
	\left\| \left( \D^T_\mathcal{T} \D_\mathcal{T} \right)^{-1} \D^T_\mathcal{T} \d_i \right\|_1
	\leq \left\| \left( \D^T_\mathcal{T} \D_\mathcal{T} \right)^{-1} \right\|_1 \left\| \D^T_\mathcal{T} \d_i \right\|_1.
	\end{equation}
	Using the definition of the mutual coherence, it is easy to see that the absolute value of the entries in the vector $\D^T_\mathcal{T} \d_i$ are bounded by $\mu(\D)$. Moreover, due to the locality of the atoms, the number of non-zero inner products with the atom $\d_i$ is equal to the number of atoms in $\mathcal{T}$ that overlap with it. This number can, in turn, be bounded by the maximal number of non-zeros in a stripe from $\mathcal{T}$, i.e., its $\Loi$ norm, denoted by $k$. Therefore, $\left\| \D^T_\mathcal{T} \d_i \right\|_1 \leq k \mu(\D)$. 
	
	Addressing now the first term in Equation \eqref{eq:ERC_multiplicative}, note that
	\begin{equation} \label{eq:ERC_gram_l1_bound}
	\left\| \left( \D^T_\mathcal{T} \D_\mathcal{T} \right)^{-1} \right\|_1 = \left\| \left( \D^T_\mathcal{T} \D_\mathcal{T} \right)^{-1} \right\|_\infty,
	\end{equation}
	since the induced $\ell_1$ and $\ell_\infty$ norms are equal for symmetric matrices. Next, using the Ahlberg-Nilson-Varah bound and similar steps to those presented in Lemma 1, we have that
	\begin{equation} \label{eq:ERC_gram_infinity_bound}
	\left\| \left( \D^T_\mathcal{T} \D_\mathcal{T} \right)^{-1} \right\|_\infty \leq \frac{1}{1 - (k-1) \mu(\D) }.
	\end{equation}
	In order for this to hold, we must require the Gram $\D^T_\mathcal{T} \D_\mathcal{T}$ to be diagonally dominant, which is satisfied if $1-(k-1)\mu(\D) > 0$. This is indeed the case, as follows from the assumption on the $\Loi$ norm of $\mathcal{T}$. Plugging the above into Equation \eqref{eq:ERC_multiplicative}, we obtain
	\begin{align} \label{eq:ERC_theta_bound}
	\left\| \left( \D^T_\mathcal{T} \D_\mathcal{T} \right)^{-1} \D^T_\mathcal{T} \d_i \right\|_1
	& \leq \left\| \left( \D^T_\mathcal{T} \D_\mathcal{T} \right)^{-1} \right\|_1 \left\| \D^T_\mathcal{T} \d_i \right\|_1 \\
	& \leq \frac{k \mu(\D)}{1 - (k-1) \mu(\D) }.
	\end{align}
	Our assumption that $k < \frac{1}{2}\left(1+\frac{1}{\mu(\D)}\right)$ implies that the above term is less than one, thus showing the ERC is satisfied for all supports $\mathcal{T}$ that satisfy $\|\mathcal{T}\|_{0,\infty} < \frac{1}{2}\left(1+\frac{1}{\mu(\D)}\right)$.
\end{proof}

\renewcommand{\theequation}{C-\arabic{equation}}
\setcounter{equation}{0}  
\section{Basis Pursuit Stability Guarantee (Proof of Theorem \ref{Theorem:StabilityBP})}
\label{App:StabilityBP}

We first state and prove a Lemma that will become of use while proving the stability result of BP.

\begin{lemma}{}
\label{lemma:noise_correlation}
Suppose a clean signal $\X$ has a representation $\D\Gama$, and that it is contaminated with noise $\E$ to create the signal $\Y=\X+\E$. Denote by $\epsilon_{_L}$ the highest energy of all $n$-dimensional local patches extracted from $\E$. Assume that
\begin{equation} \label{eq:lemma_assumption}
\|\Gama\|_{0,\infty}<\frac{1}{2} \left( 1 + \frac{1}{\mu(\D)} \right).
\end{equation}
Denoting by $\X_{\text{LS}}$ the best $\ell_2$ approximation of $\Y$ over the support $\mathcal{T}$, we have that\footnote{We suspect that, perhaps under further assumptions, the constant in this bound can be improved from 2 to 1. This is motivated by the fact that the bound in \cite{Tropp2006}, for the traditional sparse model, is $1\cdot \epsilon$ -- where $\epsilon$ is the global noise level.}
\begin{equation} \label{eq:DT_res}
\| \D^T(\Y-\X_{\text{LS}}) \|_\infty \leq 2\epsilon_L.
\end{equation}
\end{lemma}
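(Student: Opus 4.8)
The plan is to mimic the classical argument (as in \cite{Tropp2006} and \cite{Elad_Book}) for bounding the correlation of the residual with the dictionary, but replacing every global noise quantity by the local one $\epsilon_L$, exactly as was done in the OMP proof in Appendix \ref{App:StabilityOMP}. First I would write the least-squares residual explicitly. Since $\X_{\text{LS}} = \D_\mathcal{T}\D_\mathcal{T}^\dagger\Y$ is the orthogonal projection of $\Y$ onto $\text{range}(\D_\mathcal{T})$, the residual is $\Y - \X_{\text{LS}} = (\mathbf{I} - \D_\mathcal{T}\D_\mathcal{T}^\dagger)\Y$. Writing $\Y = \D_\mathcal{T}\Gama_\mathcal{T} + \E$ and using that $(\mathbf{I} - \D_\mathcal{T}\D_\mathcal{T}^\dagger)\D_\mathcal{T} = \mathbf{0}$, this collapses to $\Y - \X_{\text{LS}} = (\mathbf{I} - \D_\mathcal{T}\D_\mathcal{T}^\dagger)\E$. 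So I only need to bound $\|\D^T(\mathbf{I} - \D_\mathcal{T}\D_\mathcal{T}^\dagger)\E\|_\infty$, i.e. bound $|\d_j^T(\mathbf{I} - \D_\mathcal{T}\D_\mathcal{T}^\dagger)\E|$ for every atom $\d_j$.

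Next I would split this into two pieces via the triangle inequality: $|\d_j^T(\mathbf{I} - \D_\mathcal{T}\D_\mathcal{T}^\dagger)\E| \leq |\d_j^T\E| + |\d_j^T\D_\mathcal{T}\D_\mathcal{T}^\dagger\E|$. For the first piece, I use the locality trick from Appendix \ref{App:StabilityOMP}: $\d_j = \R_i^T\R_i\d_j$ for the patch $i$ containing $\d_j$, hence $|\d_j^T\E| = |\d_j^T\R_i^T\R_i\E| \leq \|\d_j\|_2\|\R_i\E\|_2 \leq \epsilon_L$. For the second piece, I write $\D_\mathcal{T}^\dagger = (\D_\mathcal{T}^T\D_\mathcal{T})^{-1}\D_\mathcal{T}^T$, so $|\d_j^T\D_\mathcal{T}(\D_\mathcal{T}^T\D_\mathcal{T})^{-1}\D_\mathcal{T}^T\E|$. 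Here I bound $\|\D_\mathcal{T}^T\E\|_\infty \leq \epsilon_L$ by the same locality argument applied to each atom in $\mathcal{T}$, and I bound the $\ell_\infty$-to-$\ell_\infty$ action of $\D_\mathcal{T}(\D_\mathcal{T}^T\D_\mathcal{T})^{-1}$ — equivalently, bound $\|\D_\mathcal{T}^T\d_j\|_1$ (which is $\leq k\mu(\D)$ by locality and the coherence, as in Appendix \ref{App:ERC_Loi}) times $\|(\D_\mathcal{T}^T\D_\mathcal{T})^{-1}\|_\infty$ (which is $\leq \tfrac{1}{1-(k-1)\mu(\D)}$ by the Ahlberg–Nilson–Varah bound, again as in Appendix \ref{App:ERC_Loi}). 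Chaining these, the second piece is at most $\tfrac{k\mu(\D)}{1-(k-1)\mu(\D)}\,\epsilon_L$, and under the hypothesis $\|\Gama\|_{0,\infty} = k < \tfrac{1}{2}(1+\tfrac{1}{\mu(\D)})$ this ERC-type quantity is strictly less than $1$, so the second piece is at most $\epsilon_L$. Adding the two pieces gives $\|\D^T(\Y-\X_{\text{LS}})\|_\infty \leq 2\epsilon_L$, as claimed.

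The main obstacle I anticipate is the bookkeeping in the second piece: one must be careful that the quantity $k$ appearing in $\|\D_\mathcal{T}^T\d_j\|_1 \leq k\mu(\D)$ is the $\Loi$ norm of $\mathcal{T}$ (the number of atoms of $\mathcal{T}$ overlapping a single atom is at most the stripe cardinality), and likewise that the diagonal-dominance condition needed for the Ahlberg–Nilson–Varah / Gershgorin bound on $(\D_\mathcal{T}^T\D_\mathcal{T})^{-1}$ is precisely $1-(k-1)\mu(\D)>0$, which follows from the $\tfrac{1}{2}(1+\tfrac{1}{\mu(\D)})$ hypothesis. Everything else is a direct transcription of standard arguments, with the one genuinely new ingredient — reused from the OMP proof — being the replacement of the Cauchy–Schwarz bound $|\d^T\E|\leq\|\E\|_2=\epsilon$ by the sharper local bound $|\d^T\E|\leq\|\R_i\E\|_2\leq\epsilon_L$. (As the footnote in the statement notes, whether the constant $2$ can be tightened to $1$ is left open; the split above is exactly what produces the $2$.)
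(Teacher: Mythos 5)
Your proposal follows essentially the same route as the paper's proof: reduce the residual to $(\mathbf{I}-\D_\mathcal{T}\D_\mathcal{T}^{\dagger})\E$, split via the triangle inequality, and bound each piece by $\epsilon_L$ using the locality of the atoms together with the ERC-type quantity $\frac{k\mu(\D)}{1-(k-1)\mu(\D)}<1$. The one point to make explicit is that the bound $\|\D_\mathcal{T}^T\d_j\|_1\le k\mu(\D)$ holds only for $j\notin\mathcal{T}$ (for $j\in\mathcal{T}$ the diagonal entry contributes $1$), but for those indices $\d_j^T(\Y-\X_{\text{LS}})$ vanishes identically by the normal equations -- which is precisely how the paper opens its proof before restricting attention to $\overline{\mathcal{T}}$.
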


\begin{proof}
Using the expression for the least squares solution (and assuming that $\DT$ has full-column rank), we have that
\begin{align*}
\DT^T(\Y-\X_{\text{LS}})
& = \DT^T \left( \Y-\DT \left( \DT^T \DT \right)^{-1} \DT^T \Y \right) \\
& = \left( \DT^T - \DT^T \DT \left( \DT^T \DT \right)^{-1} \DT^T \right) \Y = \mathbf{0}.
\end{align*}
This shows that all inner products between atoms inside $\mathcal{T}$ and the vector $\Y-\X_{\text{LS}}$ are zero, and thus $\| \DTB^T(\Y-\X_{\text{LS}}) \|_\infty = \| \D^T(\Y-\X_{\text{LS}}) \|_\infty$. We have denoted by $\overline{\mathcal{T}}$ the complement to the support, containing all atoms not found in $\mathcal{T}$, and by $\DTB$ the corresponding dictionary.
Denoting by $\Gama_\mathcal{T}$ the vector $\Gama$ restricted to its support, and expressing $\X_{\text{LS}}$ and  $\Y$ conveniently, we obtain
\begin{align}
& \| \DTB^T(\Y-\X_{\text{LS}}) \|_\infty \\
= &\| \DTB^T \left( \mathbf{I} - \DT \left( \DT^T \DT \right)^{-1} \DT^T \right) \Y \|_\infty \\
= &\| \DTB^T \left( \mathbf{I} - \DT \left( \DT^T \DT \right)^{-1} \DT^T \right) (\DT\Gama_\mathcal{T} + \E) \|_\infty. \label{eq:DTB_res}
\end{align}
It is easy to verify that 
\begin{align}
\phantom{=} \left( \mathbf{I} - \DT \left( \DT^T \DT \right)^{-1} \DT^T \right) \DT\Gama_\mathcal{T} = \mathbf{0}.
\end{align}
Plugging this into the above, we have that 
\begin{align*}
 \| \DTB^T(\Y-\X_{\text{LS}}) \|_\infty  =  \left\| \DTB^T \left( \mathbf{I} - \DT \left( \DT^T \DT \right)^{-1} \DT^T \right) \E \right\|_\infty.
\end{align*}
Using the triangle inequality for the $\ell_\infty$ norm, we obtain
\begin{align}
& \| \DTB^T(\Y-\X_{\text{LS}}) \|_\infty \\
= & \left\| \DTB^T\E - \DTB^T\DT \left( \DT^T \DT \right)^{-1} \DT^T\E \right\|_\infty \\
\leq & \left\| \DTB^T\E \right\|_\infty + \left\| \DTB^T\DT \left( \DT^T \DT \right)^{-1} \DT^T\E \right\|_\infty. \label{eq:triangle_inequality_step}
\end{align}
In what follows, we will bound both terms in the above expression with $\epsilon_L$. First, due to the limited support of the atoms, $\d_i=\R_i^T\R_i\d_i$, where $\R_i$ extracts the $i^{th}$ local patch from the global signal, as previously defined. Thus,
\begin{align} \label{eq:local_noise}
\left\| \DTB^T\E \right\|_\infty
 = \max_{i\in\overline{\mathcal{T}}} | \d_i^T\E | & = \max_{i\in\overline{\mathcal{T}}} | \d_i^T\R_i^T\R_i\E | \\
& \leq \max_{i\in\overline{\mathcal{T}}} \|\R_i\d_i\|_2 \cdot \|\R_i\E\|_2 \leq \epsilon_L,
\end{align}
where we have used the Cauchy-Schwarz inequality, the normalization of the atoms and the fact that $\|\R_i\E\|_2\leq\epsilon_L$ $\forall \ i$. Next, we move to the second term in Equation \eqref{eq:triangle_inequality_step}. Using the definition of the induced $\ell_\infty$ norm, and the bound $\| \DT^T\E \|_\infty \leq \epsilon_L$, we have that
\begin{align*}
\left\| \DTB^T\DT \left( \DT^T \DT \right)^{-1} \DT^T\E \right\|_\infty \leq &\left\| \DTB^T\DT \left( \DT^T \DT \right)^{-1} \right\|_\infty \epsilon_L.
\end{align*}
Recall that the induced infinity norm of a matrix is equal to the maximal $\ell_1$ norm of its rows. Notice that a row in the above matrix can be written as $\d_i^T\DT \left( \DT^T \DT \right)^{-1}$, where $i\in \overline{\mathcal{T}}$. Then,
\begin{align}
&\left\| \DTB^T\DT \left( \DT^T \DT \right)^{-1} \DT^T\E \right\|_\infty \\
\leq &\max_{i\in\overline{\mathcal{T}}} \left\| \d_i^T\DT \left( \DT^T \DT \right)^{-1} \right\|_1 \cdot \epsilon_L.
\end{align}
Using the definition of induced $\ell_1$ norm and Equation \eqref{eq:ERC_gram_l1_bound} and \eqref{eq:ERC_gram_infinity_bound}, we obtain that
\begin{align}
&\left\| \DTB^T\DT \left( \DT^T \DT \right)^{-1} \DT^T\E \right\|_\infty \\
\leq &\max_{i\in\overline{\mathcal{T}}} \left\| \d_i^T\DT \right\|_1 \cdot \left\| \left( \DT^T \DT \right)^{-1} \right\|_1 \cdot \epsilon_L \\
\leq &\max_{i\in\overline{\mathcal{T}}} \left\| \d_i^T\DT \right\|_1 \cdot \frac{1}{1-(k-1)\mu(\D)} \cdot \epsilon_L,
\end{align}
where we have denoted by $k$ the $\Loi$ norm of $\mathcal{T}$. Notice that due to the limited support of the atoms, the vector $\d_i^T\DT$ has at most $k$ non-zeros entries. Additionally, each of these is bounded in absolute value by the mutual coherence of the dictionary. Therefore, $\|\d_i^T\DT\|_1 \leq k\mu(\D)$ (note that $i\notin\mathcal{T}$). Plugging this into the above equation, we obtain
\begin{align}
\left\| \DTB^T\DT \left( \DT^T \DT \right)^{-1} \DT^T\E \right\|_\infty \leq \frac{k\mu(\D)}{1-(k-1)\mu(\D)} \cdot \epsilon_L.
\end{align}
Rearranging our assumption in Equation \eqref{eq:lemma_assumption}, we get $\frac{k\mu(\D)}{1-(k-1)\mu(\D)}<1$. Therefore, the above becomes
\begin{align} \label{eq:using_assumption}
\left\| \DTB^T\DT \left( \DT^T \DT \right)^{-1} \DT^T\E \right\|_\infty < \epsilon_L.
\end{align}
Finally, plugging Equation \eqref{eq:local_noise} and \eqref{eq:using_assumption} into Equation \eqref{eq:triangle_inequality_step}, we conclude that
\begin{align}
& \| \DTB^T(\Y-\X_{\text{LS}}) \|_\infty \\
\leq & \left\| \DTB^T\E \right\|_\infty + \left\| \DTB^T\DT \left( \DT^T \DT \right)^{-1} \DT^T\E \right\|_\infty \\
\leq & \ \epsilon_L + \epsilon_L = 2\epsilon_L.
\end{align}
\vspace{-0.2cm}
\end{proof}

For completeness, and before moving to the proof of the stability of BP, we now reproduce Theorem 8 from \cite{Tropp2006}.
\begin{thm}{(Tropp):} \label{Theorem:Tropp}
	Suppose a clean signal $\X$ has a representation $\D\Gama$, and that it is contaminated with noise $\E$ to create the signal $\Y=\X+\E$. Assume further that $\Y$ is a signal whose best $\ell_2$ approximation over the support of $\Gama$, denoted by $\mathcal{T}$, is given by $\X_{\text{LS}}$, and that $\X_{\text{LS}} = \D \Gama_{\text{LS}}$. Moreover, consider $\Gama_{\text{BP}}$ to be the solution to the Lagrangian BP formulation (as in Equation \eqref{eq:lagrangian_BP}) with parameter $\lambda$. If the following conditions are satisfied:
	\begin{enumerate}[ a) ]
		\item \label{eq:ERC_assumption} The ERC is met with constant $\theta$ for the support $\mathcal{T}$; And
		\item \label{eq:Corr_assumption} $\| \D^T(\Y-\X_{\text{LS}}) \|_\infty \leq \lambda\theta$,
	\end{enumerate}
	then the following hold:
	\begin{enumerate}
		\item The support of $\Gama_{\text{BP}}$ is contained in that of $\Gama$.
		\item \label{thesis_2} $\|\Gama_{\text{BP}}-\Gama_{\text{LS}} \|_\infty < \lambda \left\| \left( \D_\mathcal{T}^T \D_\mathcal{T} \right)^{-1} \right\|_\infty$.
		\item \label{thesis_3} In particular, the support of $\Gama_{\text{BP}}$ contains every index $i$ for which $|{{\Gama_{\text{LS}}}_{ }}_i| > \lambda \left\| \left( \D_\mathcal{T}^T \D_\mathcal{T} \right)^{-1} \right\|_\infty$.
		\item The minimizer of the problem, $\Gama_{\text{BP}}$, is unique.
	\end{enumerate}
\end{thm}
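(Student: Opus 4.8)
The plan is to prove this through the variational (subgradient) characterization of the minimizer of the Lagrangian BP functional, following the standard convex-analytic route. A vector $\Lamda$ minimizes $\frac{1}{2}\|\Y-\D\Lamda\|_2^2 + \lambda\|\Lamda\|_1$ if and only if there exists $\g \in \partial\|\Lamda\|_1$ with $\D^T(\D\Lamda - \Y) + \lambda\g = \mathbf{0}$; equivalently, $\d_i^T(\Y-\D\Lamda) = \lambda\,\mathrm{sign}(\Lambda_i)$ whenever $\Lambda_i\neq 0$, and $|\d_i^T(\Y-\D\Lamda)|\leq\lambda$ whenever $\Lambda_i = 0$. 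The idea is to construct a candidate minimizer supported only on $\mathcal{T}$, verify that these conditions hold for the full dictionary, and then exploit strict convexity to conclude uniqueness.

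First I would define $\Gama_{\text{BP}}$ as the minimizer of the same functional but with its support constrained to $\mathcal{T}$. Since the ERC hypothesis presupposes that $\DT$ has full column rank (its Gram $\DT^T\DT$ is invertible), this restricted problem is strictly convex in the $\mathcal{T}$-coordinates and has a unique solution. Writing the stationarity condition on $\mathcal{T}$ and solving it, I obtain the closed form $\Gama_{\text{BP},\mathcal{T}} = \Gama_{\text{LS},\mathcal{T}} - \lambda(\DT^T\DT)^{-1}\g_\mathcal{T}$, where $\g_\mathcal{T}$ is a sign vector with $\|\g_\mathcal{T}\|_\infty\leq 1$ and $\Gama_{\text{LS},\mathcal{T}} = \DT^\dagger\Y$ is the least-squares coefficient vector satisfying $\X_{\text{LS}}=\DT\Gama_{\text{LS},\mathcal{T}}$. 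Taking the $\ell_\infty$ norm immediately yields thesis 2, namely $\|\Gama_{\text{BP}}-\Gama_{\text{LS}}\|_\infty \leq \lambda\|(\DT^T\DT)^{-1}\|_\infty$; thesis 3 then follows because any index $i$ for which $|(\Gama_{\text{LS}})_i|$ exceeds this bound cannot be sent to zero by so small a perturbation.

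The crux is verifying global optimality (thesis 1), i.e. that no atom outside $\mathcal{T}$ wants to enter. Substituting the closed form above into the residual gives the decomposition $\Y - \D\Gama_{\text{BP}} = (\Y-\X_{\text{LS}}) + \lambda\,\DT(\DT^T\DT)^{-1}\g_\mathcal{T}$. For $i\notin\mathcal{T}$ I would bound $|\d_i^T(\Y-\D\Gama_{\text{BP}})|$ by the triangle inequality into two pieces. The first is controlled by hypothesis (b), $\|\D^T(\Y-\X_{\text{LS}})\|_\infty\leq\lambda\theta$, using the orthogonality of the least-squares residual to $\mathrm{range}(\DT)$ so that the infinity norm over $\overline{\mathcal{T}}$ coincides with that over all atoms. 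The second piece is where the ERC enters decisively: the $i$-th row of $\DTB^T\DT(\DT^T\DT)^{-1}$ equals $(\DT^\dagger\d_i)^T$, so the induced $\ell_\infty$ norm of this matrix is exactly $\max_{i\notin\mathcal{T}}\|\DT^\dagger\d_i\|_1 = 1-\theta$, bounding the second piece by $\lambda(1-\theta)$. The two contributions sum to precisely $\lambda$, certifying that the extension of $\Gama_{\text{BP}}$ by zeros obeys the full optimality conditions, which proves thesis 1. Uniqueness (thesis 4) then follows by combining the strict convexity on $\mathcal{T}$ with the fact that the correlation condition holds strictly off the support, forcing every minimizer to be supported inside $\mathcal{T}$ and hence to coincide with the unique restricted minimizer.

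The main obstacle will be the exact accounting in this last step: showing that the bounds $\lambda\theta$ and $\lambda(1-\theta)$ add to exactly $\lambda$ and no more, which hinges on correctly identifying $\|\DTB^T\DT(\DT^T\DT)^{-1}\|_\infty$ with the ERC quantity $1-\theta$ and on using the same $\g_\mathcal{T}$ consistently in the restricted stationarity equation and in the external inequalities. The other delicate point is upgrading the weak inequalities my argument produces to the strict ones needed for uniqueness and for thesis 2 as stated; I would address this by arguing that the off-support optimality constraints are slack, or by invoking the strict form of the correlation hypothesis.
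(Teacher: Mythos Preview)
The paper does not prove this theorem: it is reproduced verbatim from \cite{Tropp2006} (Theorem~8 there) ``for completeness'' and then used as a black box in the proof of Theorem~\ref{Theorem:StabilityBP}. So there is no in-paper argument to compare against. Your sketch is exactly the route Tropp takes in the original reference: solve the KKT system restricted to $\mathcal{T}$ to get $\Gama_{\text{BP},\mathcal{T}}=\Gama_{\text{LS},\mathcal{T}}-\lambda(\DT^T\DT)^{-1}\g_\mathcal{T}$, read off theses~2 and~3, then certify global optimality by splitting the off-support correlation into the least-squares residual piece (controlled by hypothesis~(b)) and the $\DT^\dagger\d_i$ piece (controlled by the ERC). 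Your identification $\|\DTB^T\DT(\DT^T\DT)^{-1}\|_\infty=\max_{i\notin\mathcal{T}}\|\DT^\dagger\d_i\|_1=1-\theta$ is the correct hinge.

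The one point you flag as delicate is genuinely the only subtlety: the two pieces sum to exactly $\lambda$, giving only $|\d_i^T(\Y-\D\Gama_{\text{BP}})|\le\lambda$ off-support, which is enough to certify that the restricted solution is \emph{a} global minimizer but not, by itself, to force strict inequality in thesis~2 or to rule out other minimizers with mass outside $\mathcal{T}$. In Tropp's treatment this is handled by observing that all minimizers share the same residual $\Y-\D\Gama_{\text{BP}}$ (strict convexity of the quadratic in $\D\Gama$), so if the off-support correlations are strictly below $\lambda$ every minimizer is supported in $\mathcal{T}$ and hence, by full column rank of $\DT$, unique. Getting that strictness requires noting that the Hölder bound $|(\DT^\dagger\d_i)^T\g_\mathcal{T}|\le\|\DT^\dagger\d_i\|_1$ and hypothesis~(b) cannot simultaneously saturate at the same index $i$ for the specific $\g_\mathcal{T}$ produced by the restricted problem; Tropp closes this with a short perturbation argument. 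For the purposes of the present paper, however, only the non-strict form of thesis~2 is actually used downstream.
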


Armed with these, we now proceed to the main concern of this section, proving Theorem \ref{Theorem:StabilityBP}.

\begin{proof}
In this proof we shall show that Theorem \ref{Theorem:Tropp} can be reformulated in terms of the $\Loi$ norm and the mutual coherence of $\D$, thus adapting it to the convolutional setting. Our strategy will be first to restrict its conditions \eqref{eq:ERC_assumption} and \eqref{eq:Corr_assumption}, and then to derive from its theses the desired claims.

To this end, we begin by converting the assumption on the ERC into another one relying on the $\Loi$ norm. This can be readily done using Theorem \ref{Theorem:ERC_Loi}, which states that the ERC is met assuming the $\Loi$ norm of the support is less than $\frac{1}{2} \left( 1 + \frac{1}{\mu(\D)} \right)$ -- a condition that is indeed satisfied due to our assumption in Equation \eqref{eq:BP_assumption}.
Next, we move to assumption \eqref{eq:Corr_assumption} in Theorem \ref{Theorem:Tropp}. We can lower bound the ERC constant $\theta$ by employing the inequality in \eqref{eq:ERC_theta_bound}, thus obtaining
\begin{equation} \label{eq:theta_inequality}
\theta =  1 - \underset{i \notin \mathcal{T}}{\max} \|\D^{\dagger}_{\mathcal{T}} \d_i \|_1 \geq 1 - \frac{\|\Gama\|_{0,\infty} \mu(\D)}{1 - (\|\Gama\|_{0,\infty}-1) \mu(\D) }.
\end{equation}
Using the assumption that $\|\Gama\|_{0,\infty}<\frac{1}{3} \left( 1 + \frac{1}{\mu(\D)} \right)$, as stated in Equation \eqref{eq:BP_assumption}, the above can be simplified into
\begin{equation} \label{eq:lower_bound_theta}
\theta =  1 - \underset{i \notin \mathcal{T}}{\max} \|\D^{\dagger}_{\mathcal{T}} \d_i \|_1 > \frac{1}{2}.
\end{equation}
Bringing now the fact that $\lambda = 4\epsilon_L$, as assumed in our Theorem, and using the just obtained inequality \eqref{eq:lower_bound_theta}, condition \eqref{eq:Corr_assumption} must hold since
\begin{equation}
\| \D^T(\Y-\X_{\text{LS}}) \|_\infty \leq 2 \epsilon_L < \theta  \lambda.
\end{equation}
Note that the leftmost inequality is Lemma \eqref{lemma:noise_correlation}, and the implication here is that $\lambda \geq 4 \epsilon_L$.

Thus far, we have addressed the conditions in Theorem \ref{Theorem:Tropp}, showing that they hold in our convolutional setting. In the remainder of this proof we shall expand on its results, in particular point 2 and 3. We can upper bound the term $\left\| \left( \D_\mathcal{T}^T \D_\mathcal{T} \right)^{-1} \right\|_\infty$ using Equation \eqref{eq:ERC_gram_infinity_bound}, obtaining
\begin{equation}
\left\| \left( \D_\mathcal{T}^T \D_\mathcal{T} \right)^{-1} \right\|_\infty \leq \frac{1}{1 - ( \|\Gama\|_{0,\infty} -1) \mu(\D) }.
\label{eq:inequationFinal}
\end{equation}
Using once again the assumption that $\|\Gama\|_{0,\infty}<\frac{1}{3}\left(1+\frac{1}{\mu(\D)}\right)$, we have that $\|\Gama\|_{0,\infty}<\frac{1}{3}\left(3+\frac{1}{\mu(\D)}\right)$. From this last inequality, we get $\left( \|\Gama\|_{0,\infty} - 1 \right) \mu(\D) < \frac{1}{3}$. Thus, it follows that
\begin{equation}
\frac{1}{1 - ( \|\Gama\|_{0,\infty} -1) \mu(\D) } < \frac{3}{2}.
\end{equation}
Based on the above inequality, and Equation \eqref{eq:inequationFinal}, we get
\begin{equation}\label{eq:InverseGramBound}
\left\| \left( \D_\mathcal{T}^T \D_\mathcal{T} \right)^{-1} \right\|_\infty < \frac{3}{2}.
\end{equation}
Plugging this into the second result in Tropp's theorem, together with the above fixed $\lambda$, we obtain that
\begin{equation} \label{eq:BoundGamaBPLS}
\|\Gama_{\text{BP}}-\Gama_{\text{LS}} \|_\infty < \lambda \left\| \left( \D_\mathcal{T}^T \D_\mathcal{T} \right)^{-1} \right\|_\infty < 4\epsilon_L \cdot \frac{3}{2} = 6\epsilon_L.
\end{equation}
On the other hand, looking at the distance to the real $\Gama$,
\begin{align} \label{eq:BoundLsLast}
	\|\Gama_{\text{LS}}-\Gama\|_{\infty} & = \| \left( \D_\mathcal{T}^T \D_\mathcal{T} \right)^{-1} \D_\mathcal{T}^T\left( \Y - \X \right)\|_{\infty}\\ 
									    & \leq  \| \left( \D_\mathcal{T}^T \D_\mathcal{T} \right)^{-1} \|_{\infty}\cdot \| \D_\mathcal{T}^T \E \|_{\infty} < \frac{3}{2} \epsilon_L.
\end{align}
For the first inequality we have used the definition of the induced $\ell_\infty$ norm, and the second one follows from \eqref{eq:InverseGramBound} and a similar derivation to that in \eqref{eq:local_noise}. Finally, using triangle inequality and Equations \eqref{eq:BoundLsLast} and \eqref{eq:BoundGamaBPLS} we obtain
\begin{equation}
\| \Gama_{\text{BP}} - \Gama \|_{\infty} \leq \| \Gama_{\text{BP}} - \Gama_{\text{LS}} \|_{\infty} + \| \Gama_{\text{LS}} - \Gama \|_{\infty} < \frac{15}{2} \epsilon_L.
\end{equation}
The third result in the theorem follows immediately from the above.

\end{proof}

\vspace{-1cm}
\bibliographystyle{ieeetr}
\bibliography{MyBib}

\begin{thebibliography}{10}

\bibitem{Bruckstein2009}
A.~M. Bruckstein, D.~L. Donoho, and M.~Elad, ``{From Sparse Solutions of
  Systems of Equations to Sparse Modeling of Signals and Images},'' {\em SIAM
  Review.}, vol.~51, pp.~34--81, Feb. 2009.

\bibitem{Bristow2013}
H.~Bristow, A.~Eriksson, and S.~Lucey, ``{Fast Convolutional Sparse Coding},''
  {\em 2013 IEEE Conference on Computer Vision and Pattern Recognition},
  pp.~391--398, June 2013.

\bibitem{Wohlberg2016}
B.~Wohlberg, ``Efficient algorithms for convolutional sparse representations,''
  {\em IEEE Transactions on Image Processing}, vol.~25, pp.~301--315, Jan.
  2016.

\bibitem{Bristow2014}
H.~Bristow and S.~Lucey, ``{Optimization Methods for Convolutional Sparse
  Coding},'' tech. rep., June 2014.

\bibitem{Heide2015}
F.~Heide, W.~Heidrich, and G.~Wetzstein, ``Fast and flexible convolutional
  sparse coding,'' in {\em Computer Vision and Pattern Recognition (CVPR), 2015
  IEEE Conference on}, pp.~5135--5143, IEEE, 2015.

\bibitem{Kong2014}
B.~Kong and C.~C. Fowlkes, ``Fast convolutional sparse coding (fcsc),'' {\em
  Department of Computer Science, University of California, Irvine, Tech. Rep},
  2014.

\bibitem{Elad_Book}
M.~Elad, {\em Sparse and Redundant Representations: From Theory to Applications
  in Signal and Image Processing}.
\newblock Springer Publishing Company, Incorporated, 1st~ed., 2010.

\bibitem{Papyan2016_1}
V.~Papyan, J.~Sulam, and M.~Elad, ``Working locally thinking globally - part i:
  Theoretical guarantees for convolutional sparse coding.''.

\bibitem{Candes2005}
E.~J. Candes and T.~Tao, ``Decoding by linear programming,'' {\em Information
  Theory, IEEE Transactions on}, vol.~51, no.~12, pp.~4203--4215, 2005.

\bibitem{Tropp2004}
J.~Tropp, ``{Greed is Good: Algorithmic Results for Sparse Approximation},''
  {\em IEEE Transactions on Information Theory}, vol.~50, no.~10,
  pp.~2231--2242, 2004.

\bibitem{Morup2008}
M.~M{\o}rup, M.~N. Schmidt, and L.~K. Hansen, ``Shift invariant sparse coding
  of image and music data,'' {\em Submitted to Journal of Machine Learning
  Research}, 2008.

\bibitem{Zhu2015}
Y.~Zhu and S.~Lucey, ``Convolutional sparse coding for trajectory
  reconstruction,'' {\em Pattern Analysis and Machine Intelligence, IEEE
  Transactions on}, vol.~37, no.~3, pp.~529--540, 2015.

\bibitem{Zeiler2010}
M.~D. Zeiler, D.~Krishnan, G.~W. Taylor, and R.~Fergus, ``Deconvolutional
  networks,'' in {\em Computer Vision and Pattern Recognition (CVPR), 2010 IEEE
  Conference on}, pp.~2528--2535, IEEE, 2010.

\bibitem{Kavukcuoglu2010}
K.~Kavukcuoglu, P.~Sermanet, Y.-L. Boureau, K.~Gregor, M.~Mathieu, and Y.~L.
  Cun, ``Learning convolutional feature hierarchies for visual recognition,''
  in {\em Advances in neural information processing systems}, pp.~1090--1098,
  2010.

\bibitem{Huang2015}
F.~Huang and A.~Anandkumar, ``Convolutional dictionary learning through tensor
  factorization,'' {\em arXiv preprint arXiv:1506.03509}, 2015.

\bibitem{batenkov2017global}
D.~Batenkov, Y.~Romano, and M.~Elad, ``On the global-local dichotomy in
  sparsity modeling,'' {\em arXiv preprint arXiv:1702.03446}, 2017.

\bibitem{Dabov2006}
K.~Dabov, A.~Foi, V.~Katkovnik, and K.~Egiazarian, ``{Image denoising with
  block-matching and 3D filtering},'' {\em Proc. SPIE-IS\&T Electron. Imaging},
  vol.~6064, pp.~1--12, 2006.

\bibitem{Zoran2011}
D.~Zoran and Y.~Weiss, ``{From learning models of natural image patches to
  whole image restoration},'' {\em 2011 International Conference on Computer
  Vision, ICCV.}, pp.~479--486, Nov. 2011.

\bibitem{Romano2015}
Y.~Romano and M.~Elad, ``{Boosting of Image Denoising Algorithms},'' {\em SIAM
  Journal on Imaging Sciences}, vol.~8, no.~2, pp.~1187--1219, 2015.

\bibitem{Donoho2006}
D.~Donoho, M.~Elad, and V.~Temlyakov, ``Stable recovery of sparse overcomplete
  representations in the presence of noise,'' {\em Information Theory, IEEE
  Transactions on}, vol.~52, pp.~6--18, Jan 2006.

\bibitem{Tropp2006}
J.~A. Tropp, ``{Just Relax : Convex Programming Methods for Identifying Sparse
  Signals in Noise},'' {\em IEEE Transactions on In}, vol.~52, no.~3,
  pp.~1030--1051, 2006.

\bibitem{Welch1974}
L.~R. Welch, ``Lower bounds on the maximum cross correlation of signals
  (corresp.),'' {\em Information Theory, IEEE Transactions on}, vol.~20, no.~3,
  pp.~397--399, 1974.

\bibitem{Boyd2011}
S.~Boyd, N.~Parikh, E.~Chu, B.~Peleato, and J.~Eckstein, ``Distributed
  optimization and statistical learning via the alternating direction method of
  multipliers,'' {\em Foundations and Trends{\textregistered} in Machine
  Learning}, vol.~3, no.~1, pp.~1--122, 2011.

\bibitem{Chalasani2013}
R.~Chalasani, J.~C. Principe, and N.~Ramakrishnan, ``A fast proximal method for
  convolutional sparse coding,'' in {\em Neural Networks (IJCNN), The 2013
  International Joint Conference on}, pp.~1--5, IEEE, 2013.

\bibitem{Daubechies2004}
I.~Daubechies, M.~Defrise, and C.~De~Mol, ``An iterative thresholding algorithm
  for linear inverse problems with a sparsity constraint,'' {\em Communications
  on pure and applied mathematics}, vol.~57, no.~11, pp.~1413--1457, 2004.

\end{thebibliography}

\end{document}